\documentclass[10pt]{extarticle}
\usepackage{graphicx} 
\usepackage{mathtools, amsmath, amssymb, amsthm}
\usepackage{enumitem}
\usepackage{makecell} 

\usepackage{caption} 
\usepackage{subcaption}
\usepackage{xcolor}         
\usepackage[utf8]{inputenc}
\usepackage[authoryear]{natbib}
\usepackage[colorlinks=true,
            linkcolor=blue,  
            citecolor=black,  
            urlcolor=blue]   
           {hyperref}
\usepackage[nameinlink,capitalize,noabbrev]{cleveref} 


\newcommand{\eps}{\varepsilon}

\newcommand{\E}{\mathbb{E}}

\newcommand{\defeq}{\vcentcolon=}   

\newcount\Comments  
\Comments=1
\definecolor{darkgreen}{rgb}{0,0.6,0}
\newcommand{\kibitz}[2]{\ifnum\Comments=1{\color{#1}{#2}}\fi}

\newtheorem{theorem}{Theorem}
\newtheorem*{theorem_restatement}{Theorem}

\newtheorem{lemma}[theorem]{Lemma}
\newtheorem*{lemma_restatement}{Lemma}

\newtheorem{corollary}{Corollary}
\newenvironment{proofsketch}{\begin{proof}[Proof sketch]}{\end{proof}}

\title{Strategyproof Facility Location for Five Agents on a Circle using PCD}
\author{Ido Farjoun, Technion\\ Advisor: Reshef Meir}

\date{October 2025}
\usepackage[margin=1.3in]{geometry}

\begin{document}

\maketitle

\begin{abstract}
\noindent We consider the strategyproof facility location problem on a circle. We focus on the case of 5 agents, and find a tight bound for the PCD strategyproof mechanism, which selects the reported location of an agent in proportion to the length of the arc in front of it. We  methodically "reduce" the size of the instance space and then use standard optimization techniques to find and prove the bound is tight. Moreover we hypothesize the approximation ratio of PCD for general odd $n$.

\end{abstract}
\section{Introduction} 
\subsection{Setting}
We consider the problem of locating a facility in a metric space.
A set of $n$ strategic agents have ideal locations for the facility $x_i$, where the cost for each agent is exactly the distance from the facility to their ideal location. Each agent prefers to be as close as possible to the facility. We are interested in placing the facility in a way that optimizes some  objective function. For example minimizing the social cost - The sum of costs for all agents.
A mechanism is a function that maps the reported ideal locations of all agents, to a single location for the facility. A central authority, receives the reported ideal locations of all agents, and uses a mechanism to place the facility to optimize the objective function. Since agents want to be as close as possible to the facility, there is no guarantee their reports will be truthful. Hence the central authority is interested in mechanisms that optimize a social utility function, while being \textit{strategy-proof} (i.e agents cannot benefit from misreporting their preferred location).\newline We can generalize these mechanisms to be randomized, thus mapping the agents ideal locations to a distribution over the metric space.
Evidently, the mechanism that places the facility on the optimal location will not always be strategy-proof. Hence, we are interested in  the cost approximation ratio of the mechanism.
The facility location problem can be varied to different metric spaces (Euclidean space , circle, graphs etc.), different objective functions, ranging number of facilities, and other constraints.

\subsection{Current Knowledge}

 The most basic case of facility location is on a line, where  \citet{black48} showed that selecting the median agent location is both optimal and strategy proof. \citet{0a97e823-e66c-383d-81ae-9d347515857e} further extended this idea and characterized all strategy proof mechanisms on a discrete/continuous line under general single-peaked preferences. \citet{dokow2012mechanism} characterized the class of mechanisms for discrete lines under symmetric preferences, which is richer than the general single-peaked class, since there are less ways to manipulate, and hence more strategyproof mechanisms. It is worth noting that the median mechanism can also be extended to any acyclic graph.
By contrast, a circle is essentially the simplest metric space where median based mechanisms fail to work. This fact draws attention to finding the best strategyproof mechanism on a circle, which remains unknown. \citet{SCHUMMER2002405} showed that any deterministic (onto) strategyproof mechanism on a circle must be dictatorial, which is obviously far from optimal. Therefore, the focus has shifted to the realm of randomized strategy proof mechanisms, which remains largely unexplored. Until recently, literature doesn't mention mechanisms that approximate the optimal social cost better than the Random Dictator mechanism, which gives an approximation ratio of $2-\frac{2}{n}$ for any metric space \citet{alon2009} . \citet{meir2019} presented two mechanisms (PCD and q-QCD) that improve the upper bound for 3 agents on a circle, and improved the lower bound slightly by use of linear programming. \citet{rogowski2025}  improved the upper bound to 7/4  by mixing the PCD and Random Dictator mechanism for any odd $n\geq 3$, and hypothesize the bound for this mechanism can be lowered to 1.5.

\subsection{Contribution and Method}
Our contribution is lowering the upper bound for five agents on a circle by finding the approximation ratio of the Proportional Circle Distance (PCD) mechanism  to be $7-4\sqrt2 \approx1.343$ and showing that this bound is tight. This further improves the best known bound given by Random Dictator of $\frac{8}{5}$. Moreover we hypothesize a closed formula for the approximation ratio of PCD for any odd $n$ (Figure \ref{fig:numerical}) . A comparison of the known and new results can be seen in Table \ref{tab:results}. 

\def\vl{\vphantom{\frac{2^{2^2}}{2_{2_2}}}}
\begin{table}[t]
\centering
\resizebox{\textwidth}{!}{
\begin{tabular}{|l|c|c|c|}
\hline
\textbf{Mechanism} & \textbf{$n=3$} & \textbf{$n=5$} & \textbf{odd $n\geq7$} \\ \hline
RD & $\frac{4}{3}\approx1.333\vl$ & $\frac{8}{5}=1.6$ & $2-\frac{2}{n}$ \citep{alon2009} \\ \hline
PCD & $\frac{5}{4}=1.25\vl$ \citep{meir2019} & $7-4\sqrt{2}\approx1.34$ [Main result]* & 
\begin{tabular}[c]{@{}c@{}}
$\geq 2-O(\frac{1}{\sqrt{n}})\vl$ \cite{meir2019} \\ 
 $\frac{2\big[(n^{2}-3n+4)-\sqrt{2}\,(n-1)^{3/2}\big]}{(n-3)^{2}}\vl$ [Hypothesis]* \\ 
 
\end{tabular} \\ \hline
PCD+RD & $\leq1.75\vl$ & $\leq 1.75$ & 
\begin{tabular}[c]{@{}c@{}}
$\leq 1.75\vl$ \citep{rogowski2025} \\ 
 $\leq1.5\vl$ [Hypothesis] \\ 
 
\end{tabular}  \\ \hline
Best Upper Bound & $\frac{7}{6}\approx1.166 \vl$ (QCD) \citet{meir2019} &  $7-4\sqrt{2}\approx1.34$ & $ \frac{7}{4}=1.75$ \\ \hline
 Best Lower Bound & $\vl$ 1.0456 \citep{meir2019} & ? & ? \\ \hline
\end{tabular}}\caption{Known and new results for approximation of strategyproof mechanisms on a circle. New results marked with asterisk.
\
\label{tab:results}}

\end{table}
\noindent To find the approximation ratio for PCD, we theoretically have to compute the maximum value of the ratio between the expected cost of the mechanism solution and the cost of the optimal position, for all possible instances. This of course is a very demanding task, due to the endless asymmetric instances of 5 agents on a circle, the fact that there isn't a compact way of determining the optimal location, and the fact that this ratio isn't a differentiable function. To overcome this, we partition our instance space, and try solving the problem regionally. By first finding bounds for cases that have 2 pairs of agents that share a location, we then attempt bounding the challenging asymmetric cases by showing the two pair cases actually give worse ratios. We thus gradually cover the instance space by using bounds we have already found, until we are left with a region where the approximation ratio for a given instance can be written explicitly, thus allowing us to use basic optimization techniques. In addition we utilize the fact that in most cases, the optimal location is on the agent facing the longest arc.
To find the worst instance and observe the general PCD mechanism behavior we created an application that visualizes instances neatly:
\href{https://chatgpt.com/canvas/shared/68e8b99e57348191bfd9b42e1da5091b}{Interactive App}.

\begin{figure}
    \centering
    \includegraphics[width=0.7\linewidth]{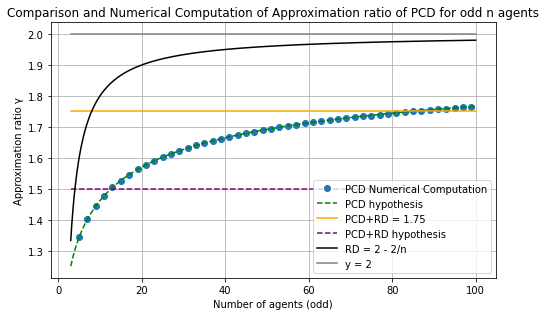}
    \caption{The hypothesized approximation ratio of PCD as number of agents grows, compared to a numerical computation, and other known mechanisms.}
    \label{fig:numerical}
\end{figure}

\section{Prelimineries}\label{sec:prelim}
We denote the metric space by $(X,d)$, where $d(x,y)$ is the shortest distance between $x$ and $y$, and we denote the ideal locations of the $n$ strategic agents by $x_i$ for $1\leq\ i\leq n$. An instance will be denoted as $\mathbf{x}=(x_1,x_2,..x_n)\in X^n$, and the facility location will be denoted by $a\in X$. The cost of placing the facility at $a$ for agent $i$ is exactly $c_i(a):=d(x_i,a)$, and the social cost is given by $ SC(\mathbf{x},a) =\sum_{i\in N} c_i(a)$. The mechanism will be denoted by $g:X^n \rightarrow X$. For randomized mechanisms $g:X^n \rightarrow \Delta X$, where $\Delta X$ is the set of distributions over $X$. 

\noindent We denote the randomized outcome by $h\in \Delta X$ and generalize the cost to 
$$c_i(h):= \E_{z\sim h} [d(x_i,z)].$$
The social cost generalizes naturally due to linearity of expectation.
\noindent A  mechanism $g$ is \textit{strategyproof} if for any report $\textbf{x}\in X^n$, for any player $i\in N$, and for any alternative report $x_i'$ we have:

$$c_i(g(\textbf{x})) \leq c_i(g(\textbf{x}')),$$ where we denoted $\textbf{x}'=(\mathbf{x}_{-i},x'_i)$ as the identical report to $\textbf{x}$ but $x_i'$ is placed instead of $x_i$. This definition simply means an agent cannot improve his cost by misreporting. \newline

\noindent On a circle, the distance between two points is the shortest arc length between them. The PCD (Proportional Circle Distance) mechanism selects each reported location $x_i$ with probablity proprtional to the length $L_i$ of the arc facing agent $i$. Formally:
The PCD mechanism assigns the facility to each agent's location with probability $P_i=\frac{L_i}{\sum_{i<n}L_i}$.
For the sake of simplicity, we study the unit circle, where the probability of placing the facility on $x_i$ is exactly $L_i$. This mechanism is strategy proof for every odd $n$ as proved by \citet{meir2019}. Thus, for this work we will always say \textbf{an agent's reported ideal location is simply his location}, since we are only working with a strategy proof mechanism.
\newline
\newline
 For the PCD mechanism for n agents, the social cost is explicitly given by $SC(\textbf{x},g(\textbf{x}))= \sum_{i=1}^nC_iP_i$, where $P_i$ denotes the probability of the facility being placed on agent $i$ and $C_i:=SC(\textbf{x},x_i)$ is social cost when placing the facility on agent $i$.
\newline 
We denote $\mathrm{OPT}(\textbf{x}) = \inf_{a \in X} \mathrm{SC}(\textbf{x},a)$ as the optimal facility location for profile $\textbf{x}$.
\noindent The ratio between the expected cost of a mechanism  solution $g(\textbf{x)}$ and the cost of the optimal cost is given by:
\[
        \gamma(g,\mathbf x) \ \defeq \  \frac{\mathrm{SC}(\mathbf{x}, g(\mathbf{x}))}{ \mathrm{SC}(\mathbf{x}, \mathrm{OPT}(\mathbf{x}))} \]
We often will write $\gamma(\textbf{x})$ when $g$ is clear from context. Consequently, the approximation ratio is given by:
\[\gamma(g):= \sup_{\textbf{x} \in X^n }\gamma(g,\mathbf x)\] 

 \noindent Finding $\gamma(g)$ is equivalent to maximizing $\gamma(\textbf{x)}$ for all $\textbf{x} \in X^n$.

\noindent In this work, The function $g$ always denotes the PCD mechanism. \newline
We will always index the agents in clockwise manner, and we will usually assume agent 3 gives $OPT(\textbf{x})$. We can do so since if 3 doesn't give $OPT(\textbf{x})$ we can shift the indices.
\newline For the PCD mechanism on a circle, the arc lengths between agents reported ideal location in a given instance give us a vector $\textbf{P}:= (P_1,P_2,P_3,P_4,P_5)$ with components that sum to 1.
\newline We  note that for every given instance $\textbf{x}$, up to a rotation of $0\leq \theta\leq 2\pi$ there is a corresponding $\textbf{P}$ which describes the instance equivalently. Hence we will often use $\textbf{x}$ and $\textbf{P}$ interchangeably.

\section{Results}

\subsection{Optimum Characterization}
We first state three Lemmas regarding the optimal location of a facility for $n$ agents on a circle. We first show that there exists an agent with an optimal location, then we show a useful trait of this agent, and lastly we prove that in some cases, we can easily identify which agent has optimal location. Proofs of these Lemmas are found in the Appendix. \newline

\begin{lemma}\label{lemma:opt_on_top}
For all instances $\textbf{x}$, there exists agent $i$ such that $x_i$ is optimal location.
\end{lemma}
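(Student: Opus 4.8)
The plan is a local first-order argument at a minimizer of the social cost. Since the circle is compact and $a\mapsto \mathrm{SC}(\mathbf{x},a)$ is continuous --- in fact piecewise linear, with slope changes only at the $x_i$ and at their antipodal points --- a global minimizer $a^\ast$ exists. It therefore suffices to show that a minimizer cannot lie at a point other than some $x_i$; I would prove this by computing the one-sided derivatives of $\mathrm{SC}(\mathbf{x},\cdot)$ at such a point and deriving a contradiction with $n$ being odd (the case of interest here).

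Fix an orientation of the circle. Let $a$ be any point with $a\neq x_i$ for all $i$, and choose $\varepsilon>0$ so small that the open arc $(a-\varepsilon,a+\varepsilon)$ contains no agent location and no antipode of an agent other than possibly $a$ itself. For each agent, as the facility moves from $a$ a distance $|h|$ in the positive (resp.\ negative) direction, the term $d(x_i,\cdot)$ is affine with slope $+1$ if the short arc from $a$ to $x_i$ runs in the negative direction and slope $-1$ if it runs in the positive direction; call these two sets of agents $S^-(a)$ and $S^+(a)$. Writing $D^-(a)$ and $D^+(a)$ for the left and right derivatives of $\mathrm{SC}(\mathbf{x},\cdot)$ at $a$, evaluating this partition just to the left of $a$ gives $D^-(a)=|S^-|-|S^+|=2|S^-|-n$; and the only agents whose membership flips as the facility passes through $a$ are exactly those whose antipode equals $a$ --- each such agent moves from $S^-$ to $S^+$ --- so $D^+(a)=D^-(a)-2m$, where $m\ge 0$ is the number of agents with antipode $a$.

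Since $n$ is odd, $D^-(a)=2|S^-|-n$ is odd, hence nonzero, and $D^+(a)=D^-(a)-2m\le D^-(a)$. If $D^-(a)>0$ then $\mathrm{SC}(\mathbf{x},\cdot)$ is strictly increasing on $(a-\varepsilon,a)$, so $a$ is not a minimizer; if $D^-(a)<0$ then $D^+(a)\le D^-(a)<0$, so $\mathrm{SC}(\mathbf{x},\cdot)$ is strictly decreasing on $(a,a+\varepsilon)$, and again $a$ is not a minimizer. Either way $a^\ast\neq a$, so $a^\ast=x_i$ for some $i$, which proves the lemma.

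I expect the only real care to be in the bookkeeping when positions coincide: several agents at one point, an agent sitting on another's antipode, or several antipodes coinciding at $a$. The argument is deliberately phrased to survive all of these --- we only ever use that $a$ itself is not an agent location, and the integer $m$ collects all agents whose antipode is $a$ at once, so that $D^+(a)=D^-(a)-2m$ and the parity obstruction remain intact. A secondary, essentially routine point is to justify the claimed piecewise-linear form of $\mathrm{SC}(\mathbf{x},\cdot)$, and hence the existence and the stated values of the one-sided derivatives; this follows from the formula $d(x_i,a)=\min(\ell,1-\ell)$ for the lengths $\ell,1-\ell$ of the two arcs between $x_i$ and $a$.
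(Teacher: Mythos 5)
Your argument is correct, and it is essentially the paper's \emph{alternative} proof of this lemma (given in the appendix): both rest on the observation that at a non-agent point the one-sided slopes of $\mathrm{SC}(\mathbf{x},\cdot)$ are integers of the same parity as $n$, hence nonzero when $n$ is odd, and that crossing the point can only shift the slope downward (by $2m$ for the $m$ agents antipodal to it), so one of the two directions strictly decreases the cost. Your version is in fact cleaner than the paper's: by packaging everything into $D^-(a)=2|S^-|-n$ and $D^+(a)=D^-(a)-2m$ you avoid the paper's separate case analysis on whether the number of antipodal agents is zero, even, or odd, and the coincidence bookkeeping is handled uniformly. The paper's \emph{primary} proof takes a genuinely different route that you should be aware of: it fixes the interval between two consecutive agents, notes that the derivative of $\mathrm{SC}$ can only jump \emph{down} inside such an interval (only antipodes, no agents, lie there), concludes $\mathrm{SC}$ is concave on each such arc, and hence attains its minimum at an endpoint. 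That argument buys something your parity argument does not: it makes no use of $n$ being odd, so it establishes the lemma for every $n$; your proof is restricted to odd $n$, which suffices for this paper but is worth flagging as a genuine difference in scope.
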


\begin{lemma}\label{lemma:median_optimal} 

There exists an agent on optimal location, such that after assigning ties (other agents on optimal location and agents antipodal to optimal location), has $\frac{n-1}{2}$ agents clockwise from him and $\frac{n-1}{2}$ counterclockwise.. We call this agent \textit{median optimal}.
\end{lemma}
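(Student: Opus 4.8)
The plan is to read off the claim from the first-order optimality conditions at an optimal agent location. By \Cref{lemma:opt_on_top}, fix a facility location $p$ that is optimal and an agent $j$ with $x_j=p$. The map $a\mapsto\mathrm{SC}(\mathbf{x},a)=\sum_{i\in N}d(x_i,a)$ is a finite sum of continuous piecewise-linear distance functions, hence continuous and piecewise linear, with breakpoints only among the agent locations and their antipodes; in particular it has well-defined one-sided directional derivatives at $p$, and since $p$ minimizes it, both the clockwise and the counterclockwise one-sided derivatives at $p$ are nonnegative.

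Next I would compute those two derivatives by classifying the agents relative to $p$: let $a_0$ be the number located at $p$ (including $j$), $m$ the number antipodal to $p$, $c$ the number strictly inside the clockwise open half-circle of $p$, and $d$ the number strictly inside the counterclockwise open half-circle, so $a_0+m+c+d=n$. For a displacement of the facility small enough not to reach any further agent or antipode, the directional derivative of $d(x_i,\cdot)$ at $p$ is $+1$ for each agent at $p$ (its distance leaves $0$), $-1$ for each antipodal agent (the antipode is the farthest point, so its distance drops in either direction), and $-1$ (resp.\ $+1$) for a clockwise agent when the facility moves clockwise (resp.\ counterclockwise), with the mirror image for a counterclockwise agent. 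Summing, the clockwise derivative equals $a_0-m-c+d$ and the counterclockwise derivative equals $a_0-m+c-d$. Both being nonnegative gives $a_0-m\ge|c-d|$, and together with $c+d=n-a_0-m$ this forces $2c\le n-2m\le n$ and $2d\le n-2m\le n$; since $n$ is odd and $c,d$ are integers, $c\le\frac{n-1}{2}$ and $d\le\frac{n-1}{2}$.

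It then remains to distribute the ties: the agents that may be counted on either side are exactly the $a_0-1$ agents co-located with $j$ other than $j$ itself, together with the $m$ antipodal agents, a total of $a_0+m-1$ tie agents. Assigning $\frac{n-1}{2}-c\ge 0$ of them to the clockwise side and $\frac{n-1}{2}-d\ge 0$ of them to the counterclockwise side uses exactly $(n-1)-(c+d)=a_0+m-1$ of them, so the assignment is feasible and makes each side contain precisely $\frac{n-1}{2}$ agents; hence $j$ is median optimal. I expect the only subtle step to be the directional-derivative bookkeeping — in particular noticing that an antipodal agent contributes $-1$ (not $0$) on both sides, and correctly accounting for the agents sitting exactly on $p$ — after which the arithmetic and the tie assignment are routine.
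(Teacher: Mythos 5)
Your proposal is correct and follows essentially the same route as the paper: both derive the first-order condition $a_0-m\ge|c-d|$ (the paper's $|L-R|\le B-A$) from the one-sided derivatives of the social cost at an optimal agent location, and then distribute the tied (co-located and antipodal) agents between the two sides. Your final bookkeeping is in fact slightly cleaner, since showing $c,d\le\frac{n-1}{2}$ and that the two deficits sum exactly to the $a_0+m-1$ ties replaces the paper's explicit parity argument.
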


\begin{lemma}\label{lemma:large_arc_opt}
    
If $P_i\geq0.5$ for some agent $i$, then $x_i$ is optimal location. \end{lemma}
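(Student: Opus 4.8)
The plan is to show that the hypothesis $P_i \ge 1/2$ is extremely restrictive: it forces all $n$ agents to lie on a closed arc of length at most $1/2$, on which the circle metric collapses to the ordinary line metric and on which agent $i$ sits in the median position, so that $x_i$ is optimal by the one‑dimensional median argument.

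First I would unpack what $P_i$ is. On the unit circle with agents indexed $1,\dots,n$ clockwise, $P_i = L_i$ is the length of the single arc facing agent $i$, i.e., the arc lying opposite agent $i$ (for $n=5$, the third of the five arcs clockwise from agent $i$), and these $n$ arcs partition the circle, so $\sum_j L_j = 1$. Hence $P_i \ge 1/2$ says that this opposite arc --- whose interior contains no agent --- has length $\ge 1/2$, so its complementary closed arc $I$ has length $1 - P_i \le 1/2$ and contains every agent. Walking along $I$ from one endpoint to the other, the agents appear in their cyclic order, and a direct count shows that agent $i$ is the $\big(\tfrac{n-1}{2}+1\big)$-th among them; equivalently, at most $\tfrac{n-1}{2}$ agents lie strictly to each side of $x_i$ along $I$.

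Next I would exploit that $I$ has length at most $1/2$: for any two points of $I$ the shorter arc between them runs inside $I$, so their circle distance equals their arc‑length distance within $I$. Consequently, for profiles and facilities confined to $I$ the problem is exactly one‑dimensional facility location on an interval, where social cost is minimized over the whole interval at a median of the point multiset; by the previous step $x_i$ is such a median, so $\mathrm{SC}(\mathbf x, x_i) \le \mathrm{SC}(\mathbf x, a)$ for every $a \in I$.

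It remains to rule out that placing the facility outside $I$ does strictly better, and this is the step I expect to require the most care. The cleanest route is to invoke Lemma \ref{lemma:opt_on_top}: some agent's location attains $\mathrm{OPT}(\mathbf x)$, and every agent lies in $I$, so $\mathrm{OPT}(\mathbf x) = \mathrm{SC}(\mathbf x, x_j)$ for some $j$ with $x_j \in I$, which by the previous paragraph is $\ge \mathrm{SC}(\mathbf x, x_i) \ge \mathrm{OPT}(\mathbf x)$; hence $x_i$ is optimal. (Alternatively one argues directly that any facility $a \notin I$ can be moved into $I$ --- to its antipode when that lies in $I$, otherwise to the nearer endpoint of $I$ --- without increasing any agent's cost.) Along the way one should also check the boundary case $P_i = 1/2$, where $I$ is a closed semicircle with antipodal endpoints, so that the distance‑collapsing and projection claims still hold (they do, with equality in places), and that degenerate clustered profiles do not disturb the median count.
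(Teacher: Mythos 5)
Your proof is correct, but it reaches the conclusion by a different route than the paper. The paper's argument is a two-line corollary of Lemma~\ref{lemma:median_optimal}: since $P_i\ge 0.5$ forces all agents onto one semicircle, the only agent who could possibly have $\frac{n-1}{2}$ agents on each side --- the ``median optimal'' trait that Lemma~\ref{lemma:median_optimal} guarantees some optimal agent possesses --- is the middle agent of that semicircle, namely agent $i$, so $x_i$ must be the optimal location. You instead lean only on Lemma~\ref{lemma:opt_on_top}: you collapse the circle metric on the complementary closed arc $I$ (of length $1-P_i\le \tfrac12$) to the line metric, observe that agent $i$ sits in the median position among the $n$ agents ordered along $I$ and hence minimizes social cost over all of $I$ by the classical one-dimensional median argument, and then use Lemma~\ref{lemma:opt_on_top} to confine the optimum to agent locations, all of which lie in $I$. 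Your identification of the arc facing agent $i$ as the one opposite it (between agents $i+2$ and $i+3$ for $n=5$) and the resulting median count are consistent with the paper's indexing, so the argument goes through. What your version buys is self-containedness: it avoids the tie-assignment bookkeeping of Lemma~\ref{lemma:median_optimal} and makes the ``facility outside $I$'' case explicit, which the paper never has to confront because Lemma~\ref{lemma:median_optimal} already places the optimum at a median-optimal agent. What the paper's version buys is brevity, by offloading all the work onto the stronger lemma. The boundary case $P_i=\tfrac12$ and coincident agents need the minor care you already flag in both treatments.
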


\noindent The main result we are looking to prove is the following theorem:
\newline
\begin{theorem}\label{theorem:theorem1} 
 The approximation ratio of the PCD mechanism for five agents on a circle is $\gamma = 7-4\sqrt{2}$
\end{theorem}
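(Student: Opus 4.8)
The plan is to prove matching upper and lower bounds on $\gamma(g) = \sup_{\mathbf x \in X^5} \gamma(g, \mathbf x)$. For the lower bound, I would exhibit an explicit family of instances achieving ratio arbitrarily close to $7 - 4\sqrt 2$; the abstract hints this is a symmetric configuration involving two pairs of coincident agents, so I would parametrize such a family by a single arc-length variable, write $\gamma$ explicitly as a function of that variable using $SC(\mathbf x, g(\mathbf x)) = \sum_i C_i P_i$ and the fact (from \Cref{lemma:large_arc_opt} and \Cref{lemma:median_optimal}) that the optimum sits on an identifiable agent, and optimize by elementary calculus. The critical point should fall at a value making the $4\sqrt 2$ appear.

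The upper bound is the real work, and I would organize it as a reduction of the instance space, exactly as the introduction describes. First, normalize: by rotation, by relabeling so that agent $3$ is \emph{median optimal} (\Cref{lemma:median_optimal}), and by possibly reflecting, so the instance is described by the probability vector $\mathbf P = (P_1,\dots,P_5)$ summing to $1$ with $P_3 < 1/2$ (if some $P_i \ge 1/2$, \Cref{lemma:large_arc_opt} pins the optimum and that case is handled separately and easily). Next, I would argue that the supremum of $\gamma$ over all instances is attained (or approached) on instances where agents coincide in pairs: the idea is a perturbation/continuity argument showing that if four of the five agents are in ``general position'' one can move an agent so as to weakly increase $\gamma$ while keeping the identity of the median-optimal agent fixed — pushing until a coincidence or antipodality occurs and the configuration degenerates. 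This collapses the $5$-dimensional problem to a low-dimensional one. Then on each of the finitely many degenerate combinatorial types, $OPT(\mathbf x)$ is given explicitly by the relevant agent's social cost, $\gamma(\mathbf x)$ becomes an explicit (piecewise-rational, non-smooth but piecewise-smooth) function of one or two arc-length parameters, and I would maximize it by standard single-variable optimization on each piece, checking boundaries and the finitely many interior critical points. The maximum over all pieces should come out to exactly $7 - 4\sqrt 2$, matching the lower bound.

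The main obstacle I expect is the reduction step: justifying rigorously that it suffices to consider paired/degenerate instances. The difficulty is that moving an agent changes \emph{both} the numerator (via the $P_i$ weights \emph{and} the $C_i$ social costs, since distances to the moved agent change) \emph{and} potentially the denominator $OPT(\mathbf x)$, and can also change which agent is median-optimal. One has to set up the perturbation carefully — likely moving one agent along the circle while tracking the sign of $\frac{d}{dt}\gamma$, showing it is monotone on each interval between ``events'' (coincidences, antipodalities, changes in the optimal agent), so that the extremum is pushed to an event boundary. Handling the asymmetric cases by showing they are dominated by the two-pair cases (as the introduction promises) is the delicate bookkeeping here, and will require care about which of the several degenerate types actually dominates. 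Once that structural reduction is in hand, the remaining optimization is routine.
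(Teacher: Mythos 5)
Your lower-bound plan and your overall architecture (normalize so agent $3$ is median optimal, peel off the large-arc case, handle degenerate configurations explicitly) match the paper's, but the core of your upper bound rests on a structural reduction that the paper does not make and that I do not believe goes through: you claim the supremum is attained on paired/degenerate instances by moving one agent monotonically ``until a coincidence or antipodality occurs.'' The paper uses exactly this perturbation argument, but \emph{only} in the proof of \Cref{large_arc_theorem}, where $P_3\ge 0.5$ pins the optimum at $x_3$ for the entire motion and, crucially, the $\epsilon^2$ terms in $SC'=\sum_i C_i'P_i'$ cancel, so $\gamma(\epsilon)$ is a ratio of two affine functions and hence monotone. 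In general position that cancellation fails: moving agent $j$ perturbs the two arcs facing its neighbours by $\pm\epsilon$ while the corresponding $C_i$'s change at rates that need not agree, so between your ``events'' $\gamma(\epsilon)$ is quadratic-over-affine and can have an interior maximum; and the identity of the optimal agent can change mid-motion, breaking the argument that the denominator is affine. So the reduction you name as your main obstacle is not merely delicate bookkeeping --- it is the step that fails, and the paper's proof is evidence that a different device is needed there.

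What the paper does instead for the non-large-arc region is absent from your proposal in two respects. First, it proves a global bound $SC(\mathbf{x},g(\mathbf{x}))\le 1.2$ (\Cref{bounding_sc}), which immediately disposes of every instance with $OPT(\mathbf{x})\ge 0.9$ (\Cref{cor_opt<0.9}) and in particular of the region $P_1+P_5\ge 0.5$ (\Cref{lemma9}); this is what makes the remaining region small enough that all pairwise distances are determined up to a finite case split. Second, on that remaining compact region it does \emph{not} reduce to one or two parameters: it fixes the combinatorial type of the two ambiguous distances $d(x_1,x_4)$ and $d(x_2,x_5)$, obtaining three explicit rational functions $\gamma_{AC},\gamma_{AD},\gamma_{BD}$ of four variables, and shows by direct gradient computation that none has an interior critical point, recursing onto boundary faces until every face is either $P_i=0.5$, $P_1+P_5=0.5$, or a coincidence face covered by \Cref{Two Pair Analysis} and \Cref{3_agents_same_spot}. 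Your proposal would need to either supply these two ingredients or find a genuinely new monotonicity argument robust to changes in the optimal agent; as written, the upper bound is not established.
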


\noindent \textbf{From this point forward we denote $\alpha :=7-4\sqrt{2}$, which is approximately 1.34.}

\noindent As mentioned in the introduction, we are essentially trying to solve an optimization problem, of a function which is clearly not differentiable, and behaves in a rather complex manner due to the metric on the circle. To prove Theorem \ref{theorem:theorem1}, we initially find a candidate for the worst instance and the corresponding approximation ratio value $\alpha$. From numerical computation we expect the worst instance to be an instance that has two pairs of agents in different locations. This is analyzed in subsection \ref{subsec:2p} "Two Pair Analysis" . Afterwards, in subsection  \ref{subsec:largest_arc} "Instances with Large Arc", we show that every instance that contains an arc greater than 0.5 has an approximation ratio that is bounded by $\alpha$. Lastly, we tightly bound the social cost in subsection \ref{subsec:BSC} "Bounding the Social Cost". This helps us bound instances with a sufficiently high optimal cost. In total, these steps enable us to significantly reduce the "size" of the instance space, and solve the optimization problem in a domain where the function is differentiable and has an explicit form - allowing us to use standard optimization techniques.

\subsection{Two Pair Analysis}\label{subsec:2p}
\noindent We first find an instance where this bound is obtained:

\noindent In the following theorem we find a bound for $\gamma(\textbf{x)}$ for all instances that have 2 pairs of agents, where the agents in each pair share the same location. Instances that are made up of 2 pairs can be characterized by using 2 variables only, which makes computations much easier. By numerical computation, the instance that maximizes $\gamma(\textbf{x}$) is of this kind, hence this is a good place to start. Our main goal afterwards will be showing every other instance is better than the worst 2 pair instance. We give a proof sketch here, and the full proof can be found in the Appendix.

\begin{theorem}\label{Two Pair Analysis}: 
For all instances $\textbf{x} $ that have 2 pairs of agents sharing the same location, $\gamma(\textbf{x})\leq \alpha$.
\end{theorem}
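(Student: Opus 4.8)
The plan is to parametrize all two‑pair instances by two arc‑length variables and reduce $\gamma(\mathbf{x})$ to an explicit function of those two variables, then maximize. Concretely, suppose agents $\{1,2\}$ share a location and agents $\{4,5\}$ share a location (the case where the two pairs are "adjacent" in the cyclic order, e.g. $\{2,3\}$ and $\{4,5\}$, should be handled separately but analogously). Up to rotation and reflection, the configuration of five points with two coincident pairs on the unit circle is determined by the arc lengths; write $p$ for the arc "facing" the first pair and $q$ for the arc facing the second pair, with the remaining arcs determined by the constraint that all five arcs sum to $1$ and that coincident agents contribute a zero arc between them. This gives $\mathbf{P}$ as an explicit affine function of $(p,q)$ on a polytope. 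Since there are effectively three distinct locations carrying multiplicities $(2,1,2)$ (or $(2,2,1)$ in the adjacent case), both $\mathrm{SC}(\mathbf{x},g(\mathbf{x}))=\sum_i C_i P_i$ and $\mathrm{OPT}(\mathbf{x})$ become piecewise‑polynomial in $(p,q)$.

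The next step is to pin down $\mathrm{OPT}(\mathbf{x})$ using the Optimum Characterization lemmas. By Lemma~\ref{lemma:opt_on_top} the optimum is attained at one of the (at most three) distinct agent locations, and Lemma~\ref{lemma:median_optimal} together with Lemma~\ref{lemma:large_arc_opt} will, in most of the parameter region, single out which of the three locations is optimal — typically the one facing the longest arc, or the "middle" of the three locations carrying the median multiplicity. This partitions the $(p,q)$‑polytope into finitely many sub‑regions, on each of which $\mathrm{OPT}$ equals a fixed one of the three candidate social costs, each a simple explicit expression in $p,q$ (sums of chord lengths $d(\cdot,\cdot)$, which on the unit circle are arc lengths since we only need the metric, not Euclidean chords). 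On each sub‑region $\gamma(\mathbf{x})$ is then a genuine rational function of $(p,q)$, and I would compute its maximum by checking interior critical points (vanishing gradient) and the boundary edges of the sub‑region, the latter reducing to one‑variable calculus. The claimed optimum $\alpha=7-4\sqrt2$ should appear at an interior critical point of one sub‑region, with the value $7-4\sqrt2$ arising from a quadratic whose discriminant produces the $\sqrt2$; I would exhibit the maximizing $(p^\ast,q^\ast)$ explicitly and verify $\gamma=\alpha$ there by direct substitution.

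The main obstacle I anticipate is the case analysis for $\mathrm{OPT}$: near the boundaries where two candidate social costs coincide, or where an agent becomes antipodal to a candidate location, the identification of the optimal vertex flips, and one must make sure the bound $\gamma\le\alpha$ is checked on \emph{every} resulting sub‑region, including the degenerate ones where a pair collides with the singleton (reducing to fewer distinct points) or where an arc vanishes. A secondary nuisance is that "two pairs sharing a location" has a couple of combinatorially distinct cyclic patterns depending on whether the two pairs are separated by the singleton on both sides or only one; each yields a slightly different affine map $(p,q)\mapsto\mathbf{P}$ and must be optimized on its own, though by symmetry the computations are near‑identical. Once the region decomposition is fixed, the remaining work is the routine but tedious optimization of a handful of two‑variable rational functions, and the conclusion $\gamma(\mathbf{x})\le\alpha$ on all two‑pair instances follows by taking the maximum over the finitely many sub‑region maxima.
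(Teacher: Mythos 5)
Your proposal follows essentially the same route as the paper's proof: fix the singleton agent, parametrize all two-pair instances by two arc lengths, use the optimum-characterization lemmas to case-split on which of the three distinct locations is optimal, write $\gamma$ as an explicit two-variable rational function on each sub-region, and maximize via interior critical points and boundary analysis. One small correction to your prediction: in the paper's computation the value $\alpha=7-4\sqrt2$ is attained not at an interior critical point but on the boundary where the largest arc equals exactly $0.5$, at $P=\bigl(\tfrac{\sqrt2-1}{2},0,\tfrac12,0,\tfrac{2-\sqrt2}{2}\bigr)$; since your plan checks boundary edges as well, this does not affect its validity.
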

\begin{proofsketch}
By fixing $x_3$ as the "lone agent" in a given instance, and denoting his distance from $x_1,x_2$ by $s$ and his distance from $x_4,x_5$ by $t$, we see all instances of this kind are of the form $P=(t,0,1-t-s,0,s)$. For each possible order of the lengths of the arcs, we can easily compute all the $C_i$ values, and moreover we can compute $OPT(\textbf{x})$ for each case - Thus allowing us to compute $\gamma(\textbf{x)}$ explicitly as $\gamma(t,s)$. We are left with an optimization problem of a 2 variable differentiable rational function with constraints. By a quick analysis, all of these functions obtain their maximum values on their boundary, and are easily computed. We find the maximum value to be $\alpha=7-4\sqrt{2}$, and the worst instance to be of the form $P=(0.207,0,0.5,0,0.293)$ which is seen in Figure~\ref{fig:combined_figure}(a). 
\end{proofsketch}

\begin{figure}[htbp]
    \centering
    \begin{subfigure}[b]{0.48\textwidth}
        \centering
        \raisebox{0.1cm}{\includegraphics[width=\textwidth]{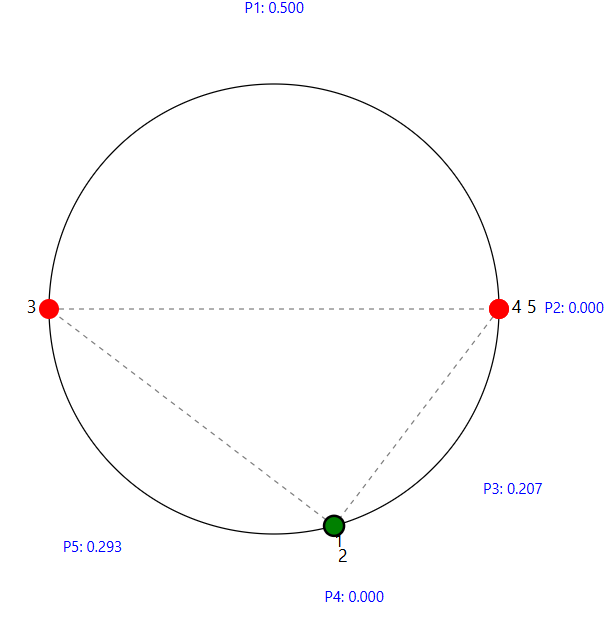}} 
        \caption{The worst 2 pair instance. Agent 3 is lone agent, and the green node represents the optimal location.}
        \label{fig:subim1}
    \end{subfigure}
    \hfill 
    \begin{subfigure}[b]{0.44\textwidth}
        \centering
        \raisebox{0cm}{\includegraphics[width=\textwidth]{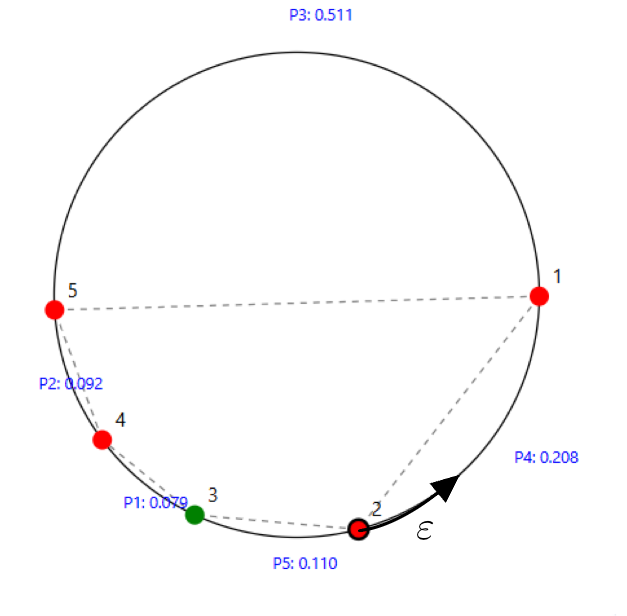}}
        \caption{An instance with an arc larger than 0.5. The anotation represents agent 2 movement counterclockwise by $\epsilon$.}
        \label{fig:subim2}
    \end{subfigure}
    \caption{}
    \label{fig:combined_figure}
\end{figure}

\noindent From this point forward, we will show that $\gamma(x)\leq \alpha$ for all instances. Before moving forward, we will take care of cases that have 3 agents on same location:
\newline \begin{lemma}\label{3_agents_same_spot}

 If \textbf{$x$} is an instance where 3 agents share the same location, and $P_3$ then $\gamma(\textbf{x})=1$.\end{lemma}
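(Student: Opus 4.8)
The plan is to show two things: the common location $p$ of the three coinciding agents is an optimal facility location, and that on such an instance PCD has expected social cost exactly $\mathrm{SC}(\mathbf{x},p)$. Together these give $\gamma(\mathbf{x}) = \mathrm{SC}(\mathbf{x},g(\mathbf{x}))/\mathrm{OPT}(\mathbf{x}) = 1$. By the usual re-indexing convention we may take the coinciding point to be $x_3$.

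For optimality of $p$ I would argue directly from the triangle inequality rather than invoking the optimum-characterization lemmas. Let the two non-coinciding agents be at $x_j,x_k$ (possibly equal to one another or to $p$). For any candidate location $a$,
\begin{align*}
\mathrm{SC}(\mathbf{x},a) &= 3\,d(p,a) + d(x_j,a) + d(x_k,a)\\
&\ge 3\,d(p,a) + \big(d(x_j,p)-d(p,a)\big) + \big(d(x_k,p)-d(p,a)\big)\\
&= d(p,a) + \mathrm{SC}(\mathbf{x},p) \ \ge\ \mathrm{SC}(\mathbf{x},p),
\end{align*}
so $\mathrm{OPT}(\mathbf{x}) = \mathrm{SC}(\mathbf{x},p)$; in particular one of the three agents realizes the optimum promised by Lemma~\ref{lemma:opt_on_top}.

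The substantive step is reading off the PCD weights. Since the three agents occupy the single point $p$, in the clockwise ordering they are consecutive and the remaining two lie in the complementary arc. Label the agents at $p$ by $\pi_1,\pi_2,\pi_3$ (clockwise) and the others by $\sigma_1,\sigma_2$; then the five consecutive gaps, starting just after $\pi_1$, are $0,\,0,\,\ell_1,\,\ell_2,\,\ell_3$ with $\ell_1+\ell_2+\ell_3 = 1$ (if $x_j = x_k$ one $\ell_i$ vanishes, which is harmless). Using PCD exactly as in the two-pair analysis — agent $i$ receives probability equal to the length of the gap having $(n-1)/2 = 2$ agents strictly on each side of it — one checks that $P_{\pi_1}=\ell_1$, $P_{\pi_2}=\ell_2$, $P_{\pi_3}=\ell_3$ and $P_{\sigma_1}=P_{\sigma_2}=0$: from either $\sigma$, stepping two agents forward lands on a zero gap, whereas from the $\pi$'s it lands on the $\ell_i$'s (reversing orientation only permutes the $\ell_i$ among the $\pi_i$, so nothing depends on the convention). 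Because $\pi_1,\pi_2,\pi_3$ are all at $p$ we have $C_{\pi_1}=C_{\pi_2}=C_{\pi_3}=\mathrm{SC}(\mathbf{x},p)$, hence
\[
\mathrm{SC}(\mathbf{x},g(\mathbf{x})) = \sum_i P_i C_i = (\ell_1+\ell_2+\ell_3)\,\mathrm{SC}(\mathbf{x},p) = \mathrm{SC}(\mathbf{x},p) = \mathrm{OPT}(\mathbf{x}),
\]
and therefore $\gamma(\mathbf{x}) = 1$.

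The only place calling for care is the PCD bookkeeping, chiefly the claim that the two lone agents always receive probability $0$; the degenerate variants ($x_j=x_k$, or four or five agents coinciding at $p$) are covered by the same reading of the gaps, which still places all probability mass on agents sitting at $p$. I do not expect any genuine obstacle: once the weights are computed, the lemma falls out immediately.
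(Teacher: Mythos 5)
Your proof is correct and takes essentially the same route as the paper's: the two lone agents receive zero PCD weight, so the mechanism places the facility at the cluster with probability one, and the cluster location is optimal, giving $\gamma(\mathbf{x})=1$. Your reverse-triangle-inequality check of optimality and the explicit gap bookkeeping simply spell out what the paper's proof asserts more tersely (``optimal location is obviously at the cluster'').
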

\begin{proof}
 Assume $x_2=x_3=x_4$. Hence, the arc lengths are: $P=(0,P_2,P_3,P_4,0)$. Moreover we know $C_2=C_3=C_4 \coloneqq C_{cluster}$. The Social cost of PCD is given by $SC(\textbf{x},g(\textbf{x})) = \sum_{i=1}^5 C_iP_i= C_{cluster}(P_2+P_3+P_4)=C_{cluster}$. Optimal location is obviously at the cluster, hence $\gamma(\textbf{x})= \frac{C_{cluster}}{C_{cluster}}=1$. \end{proof}

\subsection{Instances with Large Arc}\label{subsec:largest_arc}
\noindent Since we know exactly where the optimal location is for instances that have an arc greater or equal to 0.5, we will bound all these cases.

\begin{theorem}\label{large_arc_theorem}
For every instance $\bf x$ with an arc larger or equal to 0.5,  $\gamma(\bf{x}) \leq \alpha$.
\end{theorem}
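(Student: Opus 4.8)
The plan is to work with the arc-length vector $\mathbf P=(P_1,\dots,P_5)$ and assume, without loss of generality, that the large arc is the one \emph{facing} agent $3$, i.e.\ $P_3\geq 1/2$. By Lemma~\ref{lemma:large_arc_opt}, $x_3$ is then an optimal location, so $\mathrm{OPT}(\mathbf x)=C_3=\mathrm{SC}(\mathbf x,x_3)$. The goal is thus to show
\[
\gamma(\mathbf x)=\frac{\sum_{i=1}^5 C_i P_i}{C_3}\leq\alpha
\]
for every such $\mathbf x$. First I would parametrize the remaining degrees of freedom: since $P_3\geq 1/2$ the four agents $1,2,4,5$ all lie on the complementary arc of length $1-P_3\leq 1/2$, and their positions are determined by the small arcs $P_1,P_2,P_4,P_5\geq 0$ with $P_1+P_2+P_4+P_5=1-P_3$. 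On that short arc every pairwise distance is simply the sum of the intervening $P_j$'s (no wraparound), so each $C_i$ becomes an explicit piecewise-linear function of $(P_1,P_2,P_4,P_5)$ once we know the cyclic order — and the order is fixed by the clockwise indexing, so there is essentially one formula. Then $\gamma(\mathbf x)$ is an explicit rational function, and the claim reduces to a finite optimization.

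The key steps, in order: (1) write $C_3$ explicitly — it is the sum of arc-distances from $x_3$ to each of $x_1,x_2,x_4,x_5$, each distance being at most $1/2$ since it lies within the short arc, so $C_3=(P_1+P_2)+P_1'+\dots$ expanded in the $P_j$; (2) write each $C_i$ for $i\neq 3$ similarly, being careful that the distance from $x_i$ to $x_3$ equals $1-P_i$ while distances among $x_1,x_2,x_4,x_5$ stay on the short arc; (3) substitute into $\gamma$ and simplify, using $P_1+P_2+P_4+P_5=1-P_3$ to eliminate one variable (I would keep $P_3$ and, say, $P_1,P_2,P_4$); (4) maximize the resulting rational function over the simplex $\{P_3\geq 1/2,\ P_1,P_2,P_4\geq 0,\ P_1+P_2+P_4\leq 1-P_3\}$. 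For step (4) I expect the maximum to occur on the boundary — either with some $P_j=0$ (which collapses to a two-pair or three-cluster configuration already handled by Theorem~\ref{Two Pair Analysis} and Lemma~\ref{3_agents_same_spot}, giving $\gamma\leq\alpha$), or with $P_3=1/2$ exactly. On the face $P_3=1/2$ the problem becomes genuinely two-dimensional after elimination, and a direct critical-point computation (or a convexity/quasi-convexity argument showing $\gamma$ has no interior max) should pin the value down; I'd expect the extremizer to coincide with the worst two-pair instance $\mathbf P=(0.207,0,0.5,0,0.293)$ from Theorem~\ref{Two Pair Analysis}, so that the bound is met with equality exactly at the known tight instance.

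The main obstacle is step (2)/(3): handling the antipodal subtlety. When $P_3$ is close to $1/2$, an agent $i\neq 3$ can be nearly antipodal to $x_3$, and for agents on opposite sides of the short arc the "which way is shorter" question can flip; one must check whether $d(x_i,x_j)$ for two of the non-median agents is still measured along the short arc or wraps the other way. I would handle this by noting that all of $x_1,x_2,x_4,x_5$ lie within an arc of length $\le 1/2$, hence \emph{any} two of them are at arc-distance $\le 1/2$ measured along that arc, so no wraparound occurs among them; the only non-short distances are the four $d(x_i,x_3)=1-P_i$, and here $1-P_i\ge 1/2$ always since $P_i\le 1-P_3\le 1/2$. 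This keeps every $C_i$ a clean linear expression and makes step (4) a routine (if slightly tedious) Lagrange/boundary analysis rather than a case explosion. A secondary check is to confirm that reducing to the boundary faces $P_j=0$ really does land inside the hypotheses of the earlier two-pair and three-cluster lemmas — in particular that "$P_3\ge 1/2$ and one small arc zero" is a two-pair instance in their sense — which is immediate from the definitions.
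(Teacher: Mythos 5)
Your overall strategy (write each $C_i$ explicitly in the $P_j$'s, then optimize the rational function $\gamma$ over the constrained simplex) is a legitimate alternative to the paper's argument, but as written it has two genuine problems. First, the geometric setup is wrong: when the arc \emph{facing} agent $3$ satisfies $P_3\geq 1/2$, that arc is the one between $x_5$ and $x_1$, so \emph{all five} agents, including $x_3$, lie on the complementary arc of length $1-P_3\leq 1/2$. Consequently every one of the ten pairwise distances is measured along that short arc and is a sum of consecutive $P_j$'s; there are no ``non-short'' distances, and in particular $d(x_i,x_3)\neq 1-P_i$ (your formula gives a value $\geq 1/2$, whereas any two points of a common semicircle are at distance $\leq 1/2$; e.g.\ $d(x_3,x_4)=P_1$ in the paper's indexing). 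This error would propagate into every $C_i$ with $i\neq 3$. Ironically the correct picture makes your step (2) easier, since each $C_i$ is a single linear form with no wraparound cases at all.

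Second, and more importantly, the heart of the theorem --- showing that the maximum of $\gamma$ is attained on a face already covered by earlier results --- is only asserted (``I expect the maximum to occur on the boundary''), and the faces you name are not quite the right ones: setting a single $P_j=0$ merges only one pair of agents, which is neither a two-pair instance nor a three-agent cluster, so your boundary descent would have to be iterated through successively lower-dimensional faces. The paper closes exactly this gap with a short perturbation argument instead of a face-by-face optimization: if agent $2$ (or $4$) is isolated, sliding it by $\eps$ changes the arc vector by $(0,0,0,-\eps,+\eps)$ and each $C_i$ affinely, the $\eps^2$ terms in $\sum_i C_iP_i$ cancel, and hence $\gamma(\eps)=(SC+\eps A)/(C_3+\eps)$ has a derivative of constant sign. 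One can therefore slide agents $2$ and $4$ in the non-decreasing direction until each coincides with a neighbour, landing in a two-pair or three-cluster instance and invoking Theorem \ref{Two Pair Analysis} and Lemma \ref{3_agents_same_spot}. If you want to keep the direct-optimization route, you must actually carry out the critical-point and iterated face analysis (as the paper does for Theorem \ref{main_result}); the monotonicity trick is the cheaper way to obtain the same reduction here.
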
 
\begin{proof}

 Assume $P_3\geq0.5$. From Lemma \ref{lemma:large_arc_opt} we know that agent $3$ is at optimal location. We will show that in this case, moving $x_2$ (or symmetrically $x_4$) in a certain direction results in monotonic behavior of $\gamma(\textbf{x})$. By using this fact, we will move the agents in such a way that raises $\gamma(\textbf{x})$ until we reach a 2 pair instance, which we know is bounded by $\alpha$.
\newline
Lets assume we are not in a 2 pair instance. hence, either agent $2$ or agent $4$ (or both) are isolated at their location. Let's assume agent $2$ is isolated. We now move agent $2$ by epsilon counterclockwise: $x_2\rightarrow x_2+\epsilon$. This results in $(P_1,P_2,P_3,P_4,P_5) \rightarrow (P_1,P_2,P_3,P_4-\epsilon,P_5+\epsilon)$. Consequently, the costs change as follows: $(C_1,C_2,C_3,C_4,C_5)\rightarrow (C_1-\epsilon,C_2+2\epsilon,C_3+\epsilon,C_4+\epsilon,C_5+\epsilon)$. This step is easier visualized as seen in Figure \ref{fig:combined_figure}(b). Now, We can compute the new Social Cost denoted  
$SC' = \sum_{i=1}^5 C_i'P_i'$ \newline
= $P_1(C_1-\epsilon)+P_2(C_2+2\epsilon)+P_3(C_3+\epsilon)+(P_4-\epsilon)(C_4+\epsilon)+(P_5+\epsilon)(C_5+\epsilon)$
\newline = $SC+\epsilon(-P_1+2P_2+P_3+P_4-C_4+P_5+C_5)$
\newline = $SC+\epsilon A$ where $A$ is simply a constant.
The important thing here is that terms of $\epsilon^2$ canceled eachother out.
Now we define $\gamma(\epsilon) = \frac{SC'}{C_3'}= \frac{SC+\epsilon A}{C_3+\epsilon}$. \newline
Now $\gamma(\epsilon)$ is a one variable function of $\epsilon$, so we can differentiate it:\newline
$\gamma^{'}(\epsilon)= \frac{A(C_3+\epsilon)-(SC+\epsilon A)}{(C_3+\epsilon)^2} = \frac{AC_3-SC}{(C_3+\epsilon)^2}$. Since the numerator is constant and the denominator is always positive, we conclude $\gamma$ is a monotone function of $\epsilon$. Thus moving agent $2$ in one direction leads to nondecreasing $\gamma$, and the other direction leads to nonincreasing $\gamma$. The exact same argument holds for moving agent $4$. Thus lets move agent $2$ and $4$ in the directions where $\gamma$ doesnt increase, until they reach another agent. We reach an instance that is either 2 pair, or an instance that has 3 agents on same location, and by Theorem \ref{Two Pair Analysis} and Lemma \ref{3_agents_same_spot} these cases are bounded by $\alpha$ as wished, and hence our original instance is bounded by $\alpha$ aswell. \end{proof}

\subsection{Bounding the Social Cost}\label{subsec:BSC}
\noindent The following theorem  gives a tight global bound for the social cost. Bounding the social cost helps bound $\gamma(\textbf{x})$, thus making this result beneficial. The complete proof can be found in the Appendix, but the main idea of the proof will be shown here.
\newline
 \begin{theorem}\label{bounding_sc}
    
 $SC(\textbf{x},g(\textbf{x}))= \sum_{i=1}^5 C_iP_i \leq 1.2$ for any instance $\textbf{x}$, and 1.2 is obtained when all agents are distanced equally from one another. \end{theorem}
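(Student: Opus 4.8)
The plan is to prove the bound $SC(\mathbf{x},g(\mathbf{x})) = \sum_{i=1}^5 C_i P_i \leq 1.2$ by working entirely in terms of the arc-length vector $\mathbf{P} = (P_1,\dots,P_5)$, since both the $P_i$ and the social costs $C_i = SC(\mathbf{x},x_i)$ are determined by the arcs. First I would express each $C_i$ explicitly as a sum of arc lengths: the distance $d(x_i,x_j)$ is the shorter of the two ways around the circle, so $C_i = \sum_{j} d(x_i,x_j)$ is a piecewise-linear function of $\mathbf{P}$, linear on each region determined by which arc-sums exceed $1/2$. In particular, within a fixed combinatorial region, $SC = \sum_i C_i P_i$ is a quadratic form in $\mathbf{P}$, and the constraint set $\{\mathbf{P} \geq 0,\ \sum_i P_i = 1\}$ is the standard simplex. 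So the strategy is: reduce to maximizing a quadratic over the simplex, region by region.

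Next I would exploit the structure that makes this tractable. The key observation is that $\sum_i C_i = \sum_i SC(\mathbf{x},x_i) = \sum_i \sum_j d(x_i,x_j) = 2\sum_{i<j} d(x_i,x_j)$, i.e. twice the total pairwise distance, which is itself a fixed linear functional of $\mathbf{P}$ once the region is fixed. A clean approach is to bound $SC = \sum_i C_i P_i$ by relating it to this total, or to bound each $C_i$ from above by an expression like $\tfrac{1}{2}(\text{something involving the two arcs adjacent to }x_i)$. Concretely, one can show that for any point $x_i$, the sum of distances to all agents is at most some function of $L_i$ (the arc facing $i$) — or better, use Lemma~\ref{lemma:large_arc_opt} style reasoning together with the fact that $C_i \geq OPT$ for all $i$. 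Then $SC = \sum C_i P_i$ is a convex combination of the $C_i$ with weights $P_i$, so it lies between $\min_i C_i$ and $\max_i C_i$; but we want an absolute bound, so this alone is not enough and we genuinely need the quadratic structure.

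The cleanest route I would actually pursue: on each region, $SC(\mathbf{P})$ is quadratic, so by standard optimization of a quadratic over a simplex, the maximum is attained either at a vertex of the feasible region (where the face structure of the simplex intersects the region's defining hyperplanes) or where the gradient condition $\nabla SC = \lambda \mathbf{1}$ holds on a face. At the fully symmetric point $\mathbf{P} = (\tfrac15,\dots,\tfrac15)$ every $C_i$ equals the same value $c^\star$ (sum of distances from one vertex of a regular pentagon to the other four, on the unit circle), and one computes $SC = c^\star$; I would verify numerically/symbolically that $c^\star = 1.2$ and that this is the constrained maximum, checking that it satisfies the first-order conditions and that the Hessian on the central simplex face is negative semidefinite in the relevant directions (so it is a genuine max, not a saddle). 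The remaining work — and the main obstacle — is the case analysis over combinatorial regions: there are many ways the arc-sums can order themselves relative to $1/2$, each giving a different quadratic, and one must confirm that none of these alternative quadratics exceeds $1.2$ anywhere on its piece of the simplex, including on the boundaries between regions where agents coincide or become antipodal. I expect most regions to be dispatched quickly (the quadratic is "pulled toward" the symmetric configuration), but verifying that the symmetric point dominates globally, rather than just being a local max within its own region, is the delicate part, and is presumably where the Appendix proof spends most of its effort.
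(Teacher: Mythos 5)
Your outline is a workable strategy in principle, but it is not yet a proof, and it misses the one idea that makes the theorem short. The step you defer to the end --- enumerating every combinatorial region determined by which arc-sums exceed $\tfrac12$, writing the exact quadratic $SC(\mathbf{P})$ on each, and maximizing each one over its piece of the simplex --- is the entire content of the statement, and you explicitly leave it undone ("presumably where the Appendix proof spends most of its effort"). With five agents the number of such regions is large, each region is a polytope cut out of the simplex by several half-spaces, and on each you would have to run the full face-by-face quadratic optimization you describe. Nothing in your write-up rules out one of those regional quadratics exceeding $1.2$, so as it stands the claim is unverified.

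The paper avoids all of this with a single observation: you never need the exact $C_i$, only an upper bound, and since $d(x_i,x_j)=\min(\ell,1-\ell)$ is at most \emph{either} arc path between $x_i$ and $x_j$, you may replace every pairwise distance by one fixed, conveniently chosen path. This yields linear bounds valid on the whole simplex with no case distinctions, e.g. $C_1\le 2P_3+2P_4+P_2+P_5=1-P_1+P_3+P_4$ and its cyclic analogues. Multiplying by $P_i$ and summing gives one global quadratic upper bound $f(\mathbf{P})\ge SC$, and solving $\nabla f=\mathbf{0}$ on the hyperplane $\sum_i P_i=1$ gives the unique stationary point $(\tfrac15,\dots,\tfrac15)$ with $f=1.2$ (your Hessian check is the right instinct here --- one should confirm this critical point is the constrained maximum). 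Tightness is free: at the regular configuration every replaced path is genuinely the shorter one, so $SC=f=1.2$ there, which is the equality case stated in the theorem. This collapses your region-by-region plan to a one-page computation; your auxiliary remarks ($\sum_i C_i=2\sum_{i<j}d(x_i,x_j)$, the convex-combination bound between $\min_i C_i$ and $\max_i C_i$) do not contribute, as you yourself note.
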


\begin{proofsketch}

  The social cost is given by $\sum_{i=1}^5 C_iP_i$. Given an instance of agents on the circle, $\textbf{P}=(P_1,P_2,P_3,P_4,P_5)$ are all known. However, each $C_i$ value is determined by summing the distances from agent $i$ to the rest of the agents, for example:
  \[
\begin{aligned}
C_1 = d(x_1,x_2)+d(x_1,x_3)+d(x_1,x_4)+d(x_1,x_5)
\end{aligned}
\] Since $d(x_i,x_j)= min(|x_i-x_j|, 1-|x_i-x_j|) \leq |x_i-x_j|$, we can bound each $C_i$ value by some linear combination of $P_i$'s. Hence, we can bound each $C_iP_i$ value by a second degree polynomial of $P_i$ variables, and consequently after summing, we can bound the whole social cost by a second degree polynomial with $P_i$ variables. By  finding the maximal value of this polynomial under the constraint of $\sum_{i=1}^5P_i=1$, we bound the social cost tightly as wished.
  
\end{proofsketch}

\noindent We now utilize this useful bound to significantly reduce our instance space:
\begin{corollary}\label{cor_opt<0.9}

For every instance\textbf{ x} with $0.9\leq OPT(\textbf{x})$, $\gamma(\textbf{x})\leq  \alpha$ \end{corollary}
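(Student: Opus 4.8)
The plan is to combine the global upper bound on the mechanism's social cost from Theorem~\ref{bounding_sc} with the assumed lower bound on the optimum, and then check that the resulting constant is below $\alpha$. Concretely, for any instance $\mathbf{x}$ Theorem~\ref{bounding_sc} gives $\mathrm{SC}(\mathbf{x}, g(\mathbf{x})) \leq 1.2$, and the hypothesis of the corollary gives $\mathrm{OPT}(\mathbf{x}) \geq 0.9 > 0$. Dividing, $\gamma(\mathbf{x}) = \frac{\mathrm{SC}(\mathbf{x}, g(\mathbf{x}))}{\mathrm{OPT}(\mathbf{x})} \leq \frac{1.2}{0.9} = \frac{4}{3}$.

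The remaining step is purely numerical: I would verify $\tfrac{4}{3} < \alpha = 7 - 4\sqrt{2}$. This is equivalent to $4 < 21 - 12\sqrt{2}$, i.e.\ $12\sqrt{2} < 17$, i.e.\ (squaring the positive quantities) $288 < 289$, which holds. Hence $\gamma(\mathbf{x}) \leq \tfrac{4}{3} < \alpha$, proving the corollary. No case analysis, rotation argument, or optimization is needed here — the work has all been done in Theorem~\ref{bounding_sc}.

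There is essentially no obstacle in this corollary itself; the content is entirely in Theorem~\ref{bounding_sc}, whose tight bound $1.2$ is exactly sharp enough (together with the threshold $0.9$) to land strictly below $\alpha$. It is worth noting that the threshold $0.9$ is not arbitrary: one needs $\mathrm{OPT}(\mathbf{x}) \geq 1.2/\alpha = 1.2/(7-4\sqrt2)$, and $1.2/\alpha \approx 0.8934 < 0.9$, so taking $0.9$ as the cutoff is a safe, clean choice that still leaves the complementary region $\mathrm{OPT}(\mathbf{x}) < 0.9$ to be handled by the explicit optimization in the remaining subsections.
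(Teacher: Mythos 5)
Your proof is correct and follows exactly the paper's argument: bound the numerator by $1.2$ via Theorem~\ref{bounding_sc}, the denominator below by $0.9$, and check $\tfrac{4}{3} < \alpha$. The only difference is that you spell out the numerical verification $12\sqrt{2} < 17$, which the paper leaves implicit.
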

\begin{proof}

$\gamma(\textbf{x})=\frac{SC(\textbf{x},g(\textbf{x}))}{OPT(\textbf{x})}\leq \frac{1.2}{OPT(\textbf{x})}\leq \alpha$ \end{proof} 

\noindent We now know that every instance that has an arc length $P_i\geq 0.5$ or  $OPT(\textbf{x})>0.9$ is bounded by $\alpha$.

\begin{lemma}\label{lemma9}    
For instance $\textbf{x}$, If $x_3$ is optimal location, agent $3$ is median optimal and $P_1+P_5\geq 0.5$, then $\gamma(\textbf{x})\leq \alpha$. \end{lemma}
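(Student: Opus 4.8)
The plan is to peel off the long-arc case using Theorem~\ref{large_arc_theorem}, and then to show that in the remaining case the optimum is large enough for Corollary~\ref{cor_opt<0.9} to apply. Since $x_3$ is an optimal location, $\mathrm{OPT}(\mathbf{x}) = C_3$. If some arc of the instance has length at least $\tfrac12$, Theorem~\ref{large_arc_theorem} already gives $\gamma(\mathbf{x}) \le \alpha$, so I would henceforth assume every arc is strictly shorter than $\tfrac12$; in particular $P_3 < \tfrac12$, and the two arcs incident to $x_3$ are the shortest arcs to its neighbours, so $d(x_2,x_3)=P_5$ and $d(x_3,x_4)=P_1$.

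Next I would use median optimality to place the remaining two agents. As the agents are indexed clockwise and agent $3$ is median optimal, after the tie-assignment exactly two agents lie to the clockwise side of $x_3$ and two to the counterclockwise side; compatibility with the clockwise order $x_1,x_2,x_3,x_4,x_5$ forces the clockwise pair to be $\{x_4,x_5\}$ and the counterclockwise pair to be $\{x_1,x_2\}$. Hence the clockwise arc from $x_3$ through $x_4$ to $x_5$ has length $P_1+P_2\le\tfrac12$ and the counterclockwise arc from $x_3$ through $x_2$ to $x_1$ has length $P_4+P_5\le\tfrac12$. Combined with all arcs being $<\tfrac12$, these give $d(x_3,x_5)=P_1+P_2$ and $d(x_1,x_3)=P_4+P_5$.

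Summing the four distances from $x_3$ then yields
\[
\mathrm{OPT}(\mathbf{x}) = C_3 = (P_4+P_5) + P_5 + P_1 + (P_1+P_2) = (P_1+P_5) + (P_1+P_2+P_4+P_5) = (P_1+P_5) + (1-P_3).
\]
Since $P_1+P_5\ge\tfrac12$ by hypothesis and $P_3<\tfrac12$, this gives $\mathrm{OPT}(\mathbf{x}) > 1 \ge 0.9$, and Corollary~\ref{cor_opt<0.9} finishes the proof.

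The one step I expect to need care is the second one: when some agent coincides with $x_3$ or is antipodal to it, the tie-assignment in the definition of median optimal has slack, and one must still pin down the clockwise/counterclockwise split and the inequalities $P_1+P_2\le\tfrac12$, $P_4+P_5\le\tfrac12$. Here the hypothesis $P_1+P_5\ge\tfrac12$ does the work: it rules out $x_2=x_3$ (which would make $P_5=0$) and $x_4=x_3$ (which would make $P_1=0$), so neither neighbour of $x_3$ collapses onto it and the split of $x_1,x_2,x_4,x_5$ is genuinely forced; the remaining degenerate possibilities (antipodal agents, or $x_1=x_2$, $x_4=x_5$, $x_1=x_5$) only replace some of the above inequalities by equalities and leave the displayed identity for $C_3$ — hence the conclusion — unchanged. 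I would spell out this case check but do not expect it to present a real difficulty.
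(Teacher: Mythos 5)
Your proof is correct and follows essentially the same route as the paper's: use median optimality to evaluate $C_3$ (the paper bounds $d(x_3,x_5)\ge P_1$ and $d(x_3,x_1)\ge P_5$ to get $C_3\ge 2(P_1+P_5)\ge 1$, while you compute $C_3=(P_1+P_5)+(1-P_3)$ exactly), and then invoke Corollary~\ref{cor_opt<0.9}. Your preliminary reduction via Theorem~\ref{large_arc_theorem} and the discussion of tie cases are harmless extra care but not needed for the argument.
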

\begin{proof}

By Lemma~\ref{lemma:median_optimal},  $d(x_3,x_4)$ and $d(x_3,x_5)$ are calculated clockwise, and $d(x_3,x_2)$ and $d(x_3,x_1)$ are calculated counterclockwise. Thus $d(x_3,x_4)=P_1$ and $d(x_3,x_2)= P_5$. Moreover $d(x_3,x_5)\geq P_1$ and $d(x_3,x_1)\geq P_5$. By calculating $C_3$ we obtain $ OPT(x)=C_3\geq 2(P_1+P_5)\geq1$, and by Corollary \ref{cor_opt<0.9}, $\gamma(\textbf{x}) \leq 1.34$.\end{proof}

\subsection{Main Result}
\noindent We are now ready to prove that for every instance $\textbf{x}$ $\gamma(\textbf{x})\leq \alpha$.
To find $\gamma$, we are essentially solving an optimization problem with constraints: We are maximizing $\gamma(\textbf{x})$, under the constraints
$ \sum_{i=1}^5P_i=1$ and $P_i\geq 0$ for all $i$.
\newline Moreover it is sufficient to look at instances where $x_3$ is optimal location, and agent $3$ is median optimal. That is because given an instance $\textbf{x}$ we can always relabel the agents to achieve so. By using Theorem \ref{large_arc_theorem} and Lemma~\ref{lemma9}, we can restate our optimization problem to be:\newline
Maximizing $\gamma(\textbf{x})$ under the constraints $ \sum_{i=1}^5P_i=1$ , $P_i \geq 0$ for all $i$, $P_i\leq 0.5$ for all $i$ and $P_1+P_5\leq 0.5$. This is a optimization probelm on a significantly "smaller" instance space. It is important to note we are interested in maximizing this function only for instances where agent $3$ is median optimal. Instances that don't achieve this aren't in our function's domain. Moreover, as can be seen by the constraints, the domain of $\gamma$ is compact, and since $\gamma$ is continuous, it attains a maximum in the domain. The proof involves many different cases, which are solved using similar techniques, thus only part of the proof will be shown here, the rest can be found in the Appendix.
\begin{theorem}\label{main_result}
For all instances $\textbf{x}$, $\gamma(\textbf{x})\leq \alpha$.\end{theorem}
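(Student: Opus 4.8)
The plan is to leverage all the reductions established so far and then attack the residual region by explicit case analysis. After relabelling so that $x_3$ is median optimal, Theorems~\ref{large_arc_theorem} and Lemma~\ref{lemma9} let me restrict attention to profiles satisfying $\sum_i P_i = 1$, $0 \le P_i \le 0.5$ for each $i$, and $P_1 + P_5 \le 0.5$. By Corollary~\ref{cor_opt<0.9} I may further assume $\mathrm{OPT}(\mathbf x) = C_3 < 0.9$. Since agent $3$ is median optimal, the four distances from $x_3$ to the other agents split two clockwise and two counterclockwise, giving $d(x_3,x_4)=P_1$, $d(x_3,x_2)=P_5$, while $d(x_3,x_5)$ and $d(x_3,x_1)$ are the shorter of the "direct" arc and its complement. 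This means $C_3$ has a piecewise-linear closed form in the $P_i$'s, and likewise each $C_i$ and each probability $P_i$ is an explicit linear function of the arc lengths once the cyclic order and the location of every antipode relative to every agent is fixed.

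First I would enumerate the finitely many combinatorial types: for each agent $i$, whether the arc from $x_3$ to $x_i$ measured "the short way" wraps past agent $1$'s or agent $2$'s side, and where each antipodal point $\bar x_i$ falls among the five agents. Within each type, $\gamma(\mathbf x) = \mathrm{SC}(\mathbf x,g(\mathbf x))/C_3 = (\sum_i C_i P_i)/C_3$ is a ratio of a quadratic to a linear form in the free arc-length variables, restricted to a polytope cut out by the constraints above together with the order constraints of the type. I would then maximize each such ratio: on the relative interior, critical points satisfy a system obtained from $\nabla(\mathrm{SC}) \cdot C_3 = \mathrm{SC} \cdot \nabla C_3$ together with $\sum P_i = 1$, which for a quadratic-over-linear ratio reduces to a small linear system; on each face one of the $P_i$ is pinned to $0$ or $0.5$ (or $P_1+P_5$ is pinned to $0.5$), dropping the dimension and allowing induction on the number of active constraints down to the vertices and to the already-handled two-pair and three-agent cases. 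At every stage I verify the value is at most $\alpha = 7 - 4\sqrt2$, with equality exactly at the instance $P=(0.207,0,0.5,0,0.293)$ found in Theorem~\ref{Two Pair Analysis} (up to relabelling/reflection). Because the optimum of a quadratic-over-linear function on a polytope is attained either at a vertex or at an interior critical point of a face, this case analysis is exhaustive.

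The main obstacle is bookkeeping: the number of combinatorial types is large because the positions of the five antipodal points interleave with the five agents in many ways, and each type carries its own formula for $C_3$ and for the $C_i$'s. To keep this manageable I would exploit the reflection symmetry $i \leftrightarrow 6-i$ (which fixes agent $3$ and swaps the roles of $P_1 \leftrightarrow P_5$, $P_2 \leftrightarrow P_4$), halving the cases, and I would use the constraint $P_1 + P_5 \le 0.5$ together with $P_i \le 0.5$ to rule out many antipode configurations outright — for instance, with all arcs at most $0.5$ and $C_3 < 0.9$, several "wrap-around" patterns force $\mathrm{OPT} \ge 0.9$ and are eliminated immediately, much as in the proof of Lemma~\ref{lemma9}. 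The one genuinely delicate point, which I expect to need care, is confirming that no interior critical point of a high-dimensional face exceeds $\alpha$: there the ratio is not obviously monotone, so I would either solve the reduced linear system explicitly or argue, as in Theorem~\ref{large_arc_theorem}, that perturbing a suitable isolated agent makes $\gamma$ monotone and pushes the instance to a lower-dimensional face, ultimately landing on the two-pair boundary where Theorem~\ref{Two Pair Analysis} applies.
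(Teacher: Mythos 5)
Your proposal follows essentially the same route as the paper: restrict to the polytope $\sum_i P_i=1$, $P_i\le 0.5$, $P_1+P_5\le 0.5$ with agent $3$ median optimal, enumerate the finitely many distance-formula cases, show each resulting quadratic-over-linear ratio has no admissible interior critical point, and recurse through the faces until every maximizer lands on a boundary already covered by Theorem~\ref{large_arc_theorem}, Lemma~\ref{lemma9}, Theorem~\ref{Two Pair Analysis}, or Lemma~\ref{3_agents_same_spot}. The only refinement the paper adds is observing that median optimality plus the constraints pin down every pairwise distance except $d(x_1,x_4)$ and $d(x_2,x_5)$, so the "large number of combinatorial types" you worry about collapses to just four cases ($\gamma_{AC},\gamma_{AD},\gamma_{BC},\gamma_{BD}$), three after the reflection symmetry you already identified.
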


\begin{proof}
 First we assume agent $3$ is median optimal. We can calculate the costs: \newline
$C_1=2P_4+P_5+d(x_1,x_5)+d(x_1,x_4)= 2P_4+2P_5+P_3+d(x_1,x_4)$ \newline
$C_2=P_5+P_4 +d(x_2,x_4)+d(x_2,x_5)= P_5+P_4+P_5+P_1+d(x_2,x_5)$\newline
$C_3=P_2+2P_1+2P_5+P_4$\newline
$C_4=P_2+P_1+d(x_2,x_4)+d(x_4,x_1)=P_2+P_1+P_1+P_5+d(x_4,x_1)$\newline
$C_5=2P_2+P_1+d(x_5,x_2)+d(x_5,x_1)=2P_2+P_1+P_3+d(x_5,x_2)$\newline

\begin{figure}
    \centering
    \includegraphics[width=0.5\linewidth]{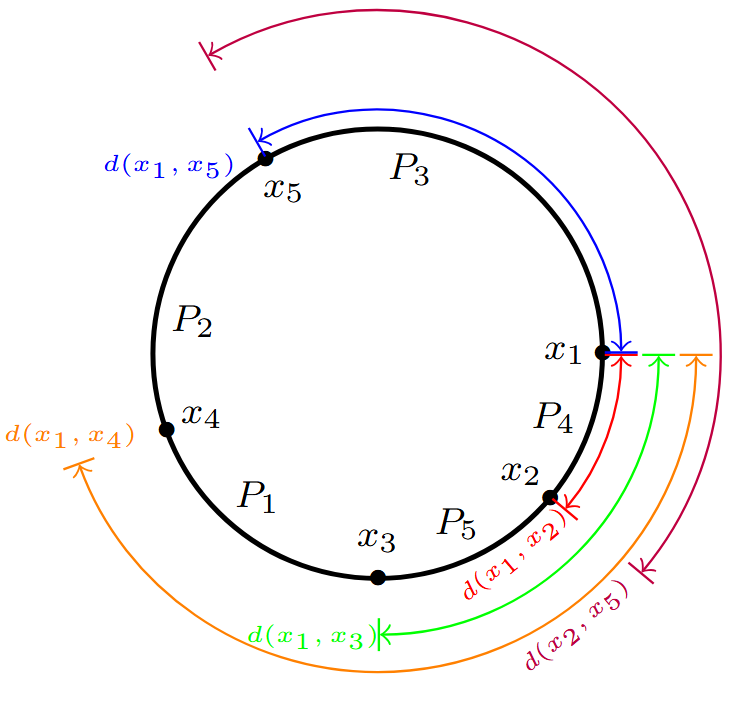}
    \caption{An example of an instance where $d(x_1,x_4)=d_A(x_1,x_4)=P_1+P_5+P_4$ and $d(x_2,x_5)=d_D(x_2,x_5)=P_3+P_4$ and thus an instance for $\gamma_{AD}$. Moreover, all distances from $x_1$ are shown, which demonstrates the ability to calculate $C_1$.}
    \label{fig:main}
\end{figure}

\noindent Where we used the fact that $3$ is median optimal to calculate distances, $d(x_1,x_5)=P_3$ (since $P_3\leq 0.5$) and $d(x_2,x_4)=P_1+P_5$ (since $P_1+P_5\leq 0.5$). \newline
At this stage, all of the $C_i$'s can be explicitly written  if $d(x_1,x_4)$ and $d(x_2,x_5)$ are determined. If they are determined, we can write all the costs as functions of $P_i$, and thus we can write $\gamma(\textbf{x})$ as a function of $P_i$'s only. Using the shortest arc metric we denote:
$d_A(x_1,x_4)= P_1+P_5+P_4$ , $d_B(x_1,x_4)=P_3+P_2$, $d_C(x_2,x_5)=P_5+P_1+P_2$ and $d_D(x_2,x_5)=P_3+P_4$.
We denote $\gamma_{AC}$, $\gamma_{AD}$, $\gamma_{BC}$, $\gamma_{BD}$, by the corresponding distances included in each $\gamma$ (See Figure \ref{fig:main} for example). By the symmetry of the problem, $\gamma_{AD}$ and $\gamma_{BC}$ are identical optimization problems, so we only need to solve one of them. By solving the appropriate optimization problem for $\gamma_{AC}$,$\gamma_{AD}$ and $\gamma_{BD}$ we will bound the original $\gamma(\textbf{x})$. \newline
Our method will be showing that each of the functions above has no interior maxima, and hence finds its maxima on the boundary. Since we previously handled the boundary cases, we will achieve our goal.\newline
To simplify notation we denote $(P_1.P_2,P_3,P_4,P_5)=(a,b,c,d,e)$.\newline We first solve the optimization problem for $\gamma_{AC}$: \newline
$\gamma_{AC}=\frac{a(a+c+3d+2e)+b(3e+d+2a+b)+c(b+2a+2e+d)+d(b+3a+2e+d)+e(3b+2a+c+e)}{b+2a+2e+d}$ \newline by inserting the constraint $\sum_{i=1}^5P_i=1$ we can simplify\newline
$\gamma_{AC}(a,b,d,e)=
\frac{-2a^{2}-2ab+2ad-2ae+2be-2de-2e^{2}+3a+b+d+3e}{\,2a+b+d+2e\,}$.\newline
\[
\text{Let }\gamma:= \gamma_{AC}(a,b,d,e),\quad D:=2a+b+d+2e
\]
\[
\nabla \gamma =
\begin{bmatrix}
\dfrac{-4a-2b+2d-2e+3-2\gamma}{D}\\[6pt]
\dfrac{-2a+2e+1-\gamma_{}}{D}\\[6pt]
\dfrac{2a-2e+1-\gamma_{}}{D}\\[6pt]
\dfrac{-2a+2b-2d-4e+3-2\gamma_{}}{D}
\end{bmatrix}=\begin{bmatrix}
0\\
0\\
0\\
0
\end{bmatrix}.
\]
\newline  By summing $\frac{\partial \gamma_{}}{\partial b}=0$ and $\frac{\partial \gamma_{}}{\partial d}=0$ we obtain $\gamma_{}=1$ and we also conclude $a=e$. By summing $\frac{\partial \gamma_{}}{\partial a}=0$ and $\frac{\partial \gamma_{}}{\partial e}=0$ we obtain $a=e=\frac{1}{3}$, which means $a+e>0.5$ which is outside our boundaries. We conclude there are no interior maxima.
\newline The maxima must lay on the boundary. If it lays on $P_i=0.5$ for some $i$, we are done, since that case is bounded by $\alpha$ by Theorem \ref{large_arc_theorem}. If it lays on $P_1+P_5=0.5$, we are done, since that case is bounded by $\alpha$ by Lemma \ref{lemma9}. Lastly, the maxima may lay on the boundary given by $P_i\geq0$ for some $i$.
\newline Let's assume maxima is on a=0. Thus we will find maxima of: \newline
\[
\gamma_{AC}(0,b,d,e)=
\frac{2be-2de-2e^{2}+b+d+3e}{\,b+d+2e\,}
\].
\newline 
\[
\text{Let }\gamma_{AC}:=\gamma_{AC}(0,b,d,e),\quad D:=b+d+2e.
\]
\[
\nabla\gamma =
\begin{bmatrix}
\dfrac{2e+1-\gamma}{D}\\[6pt]
\dfrac{1-2e-\gamma}{D}\\[6pt]
\dfrac{2b-2d-4e+3-2\gamma}{D}
\end{bmatrix}_{(b,d,e)}=\begin{bmatrix}
0\\
0\\

0
\end{bmatrix}
\]
\newline By summing $\frac{\partial\gamma}{b}=0$ and $\frac{\partial\gamma}{d}=0$ we obtain $\gamma=1$ and $e=0$ - which lies on boundary, and hence there is no interior maxima. By the same reasoning as before, the maxima lies on the boundary. If it lies on $P_i=0.5$ or $P_1+P_5=0.5$ we are done. Otherwise, it lies on $P_i=0$, which means $P_1=0$ and $P_{i\neq1}=0$ which is either a 2 pair instance or an instance with 3 agents on same location, which is already bounded by Theorem \ref{Two Pair Analysis} and Lemma \ref{3_agents_same_spot}. In the same fashion, we continue for the cases of $b=0$, $c=0$, $d=0$ and $e=0$. After so, we do the same for $\gamma_{AD}$ and $\gamma_{BD}$, which com the proof. \end{proof}
\subsection{Beyond Five Agents}
By looking at the worst instances $n=3$, $n=5$, and $n\geq 7$ by numerical analysis, we notice a "clustering" phenemona, where $k$ agents report one location, another $k$ agents report a different location, and the last single agent is located exactly antipodal to one of the previous clusters - portrayed in Figure \ref{fig:clustering}. By assuming the worst instance for all odd $n$ is of this form, we can find a closed formula for the approximation ratio for the PCD mechanism. By denoting $t$ as the distance between the two clusters, one can find the approximation ratio $\gamma(k,t)=  \frac{kt - t + \tfrac{1}{2}}{\,2kt - 2kt^{2} - t + \tfrac{1}{2}\,}$. Assuming $k$ is a constant, we can maximize $\gamma(t)$ and obtain:
\noindent
\[
\begin{gathered}
t_{\max}(k)=\frac{\sqrt{k}-1}{2(k-1)}\\[6pt]
\gamma_{hypothesis}(k)=\frac{2k^{2}-2k^{3/2}-k+1}{(k-1)^{2}}\\[6pt]
\gamma_{hypothesis}(n)=\frac{2\big[(n^{2}-3n+4)-\sqrt{2}\,(n-1)^{3/2}\big]}{(n-3)^{2}}
\end{gathered}
\]
 We clearly see that the approximation of PCD approaches 2 (\citet{meir2019}) when $n\rightarrow \infty$, and furthermore we see that $t_{max} \rightarrow0$ as $k\rightarrow\infty$, which shows that the worst instance where $n$ is large has essentially all of the agents in one cluster and a lone agent exactly antipodal.
\newline

\begin{figure}[h]
    \centering
    \includegraphics[width=0.25\linewidth]{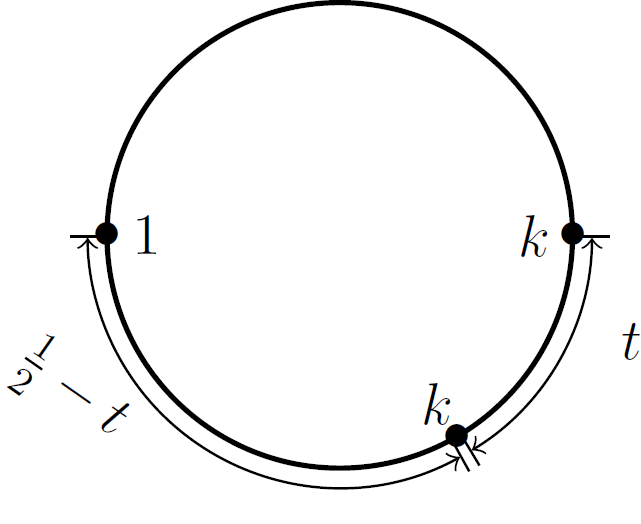}
    \caption{The "clustering" phenomena - two clusters with $k$ agents, one lone agent, and an arc of length 0.5}
    \label{fig:clustering}
\end{figure}

\subsection{Interactive App}
To better understand how the PCD mechanism works and analyze it's behavior for various instances, we built an interactive application which visualizes instances. The app calculates in real time the $P_i$ and $C_i$ values, shows the optimal location, and allows you to move agents freely on the circle (without overlap). The social cost and $\gamma(\textbf{x)}$ are shown aswell. Simultaneous movement of 2 nodes is also possible. We were interested in simultaneous movement since it is possible to move 2 nodes in a such a way that keeps the value $OPT(\textbf{x)}$ the same, and $\gamma(\textbf{x)}$ changes only as a function of the social cost. The app can be found here: \href{https://chatgpt.com/canvas/shared/68e8b99e57348191bfd9b42e1da5091b}{Interactive App}.

\section{Discussion} 
We considered a strategyproof facility location on a circle with 5 agents, and showed the approximation ratio for the PCD mechanism is $7-4\sqrt2 \approx1.34$. This lowers the previous upper bound given by the Random Dictator mechanism of $\frac{8}{5}$. In the context of finding the best strategyproof mechanism on a circle, the $n=3$ case is bounded by $\frac{7}{6}$ by the q-QCD mechanism (\citet{meir2019})  and the recent result by \citet{rogowski2025} bounded the general odd $n\geq 3$ case by $\frac{7}{4}$ using the mixed RD+PCD mechanism. This recent result only beats the RD mechanism when $n\geq7$, thus our result in some sense fills the gap, and is currently the best known result for 5 agents. 
After noticing the clustering phenomena is present in the worst instances of the PCD mechanism for $n\geq7$ by numerical computation, we hypothesized a closed formula for the approximation ratio for general odd $n$. We leave as open question to prove this is in fact the case. Only for $n\geq85$ does RD+PCD mixed mechanism beat the hypothesized approximation ratio for PCD, and hence proving so remains interesting.
In addition, we leave as an open question whether there is a better strategyproof mechanism than PCD for 5 agents on a circle.

\nocite{*}
\bibliographystyle{plainnat}   
\bibliography{references}   

\clearpage
\appendix
\section{Appendix}
\begin{lemma_restatement}[\ref{lemma:opt_on_top}]
For all instances $\textbf{x}$, there exists agent $i$ such that $x_i$ is optimal location.
\end{lemma_restatement}

\begin{proof}
For a given instance $\textbf{x}$, lets unwrap it to the $[0,1)$ with circular arithmetic. Let's denote $F(z)=\sum_{i=1}^nd(x_i,z)$, the function that measures the social cost of placing the facility at $z$. Denote $f_i(z)=d(x_i,z)$. Since distances are measured by minimal arc length, $f_i(z)=\min(|z-x_i|, 1-|z-x_i|)$. For each $f_i$, other than the point $x_i$ and its antipodal point, $f_i^{'}$ is piecewise constant, and equal to 1 for $(x_i,x_i+0.5)$ and equals -1 for $(x_i-0.5,x_i)$. Thus we conclude $F^{'}(z)$ (where it's defined) is piecewise constant, and $F^{'}(z)$ jumps up by +2 on points with agents, and jumps down by -2 on antipodal points of agents. Now let's look at the interval bounded by 2 adjacent agents $[x_k, x_{k+1}]$. Beside the boundary, there are no agents in this region, but there may be antipodal points of agents in this region, and hence, $F^{'}(z)$ is non decreasing, and hence concave. Since concave functions on compact sets recieve there minimum on the boundary of the domain, we conclude that $\min_{z\in[x_k, x_{k+1}]} F(z)=\min(F(x_k), F(x_{k+1})$ By repeating this logic for all arcs (a finite amount), we conclude the minimum value obtained is on an endpoint, and thus $OPT(\textbf{x})$ is obtained by at least one of the agents. 
\end{proof}

\begin{proof}[Alternative proof of Lemma~\ref{lemma:opt_on_top}]
Let's look at an instance $\textbf{x}$ and a point $x$ which is not on an agent, and assume $OPT(\textbf{x)}$ is obtained on $x$. We will show this point can't be optimal. Denote $L$= number of agents in open semicircle clockwise from x,  $R$ = number of agents in open semicircle counterclockwise from x and $A$= number of agents antipodal to x. First let's look at the case where $A$=0. In this case, since $n$ is odd, $L\neq R$ and hence moving $x$ by $\epsilon$ either left or right will result in lowering OPT(\textbf{x}), making $x$ not optimal. If $A$ is even, the above proof also holds. Otherwise, let's assume $A$= k is odd.
Hence, $n-k=R + L$ is even. If $L=R$ than obviously moving $x$ by epsilon in either direction will lower $OPT(\textbf{x)}$ making $x$ not optimal due to the antipodal agents. Otherwise $L\neq R$ and their difference must be even. Since $k$ is odd, by moving $x$ either left or right by $\epsilon$, we will definitely decrease $OPT(\textbf{x})$ since the odd $k$ agent can't make up for the even difference of $L$ and $R$. Hence $x$ can't be optimal, and thus optimal is obtained on one of the agents.
\end{proof}

\begin{lemma_restatement}[\ref{lemma:median_optimal}] 

There exists an agent on optimal location, such that after assigning ties ( other agents on optimal location and agents antipodal to optimal location), has $\frac{n-1}{2}$ agents clockwise from him and $\frac{n-1}{2}$ counterclockwise.. We call this agent \textit{median optimal}.
\end{lemma_restatement}
\begin{proof}
We denote L,R,A as in previous lemma, and denote $B$=number of agents on optimal location. Obviously we have $A+B+L+R=n$, and since $n$ is odd, we denote $n=2k+1$. If we move the facility $\epsilon$ clockwise, the SC rises (due to optimality) by $\epsilon(R-L+B-A)\geq 0$, and by moving the facility $\epsilon$ counterclockwise, the SC rises by $\epsilon(L-R+B-A)\geq 0$.
Hence we conclude $|L-R|\leq B-A$. Let's look at the case where there is one lone agent on optimum location and no agents antipodal: $B=1$ and $A=0$, $L+R=n-1=2k$, and hence $|L-R|$ is even and at most one, and thus is zero, and $L=R=k$, as we wished.
Denote $T=A+B-1$ for all the agents that are tied. Our goal now is to take the T "tied" agents and assign them to L and R accordingly, such that in the end, an agent on the optimal location will have $k$ agents clockwise from him and $k$ counterclockwise. First we claim $|L-R|\leq T$ . If $A\geq 1$ then: $T=A+B-1\geq B\geq B-A \geq |L-R|$. If $A=0$, $T=B-1$ and $|L-R|\leq B$. But $L-R$ and $L+R$ have same parity, and due to $T=A+B-1=2k-(L+R)$, we know $T$ and $L-R$ have same parity. $|L-R| < B$ and thus $|L-R| \leq B-1 = T$.
Now we we take $|L-R|$ agents from $T$ and assign them to $min(R,L)$. Now $R$ and $L$ are equal and we are left with $T-|L-R|$ agents to assign. Because $T$ and $|L-R|$ have the same parity, their difference is even and hence we can continue assigning the remaining evenly between the two sides, resulting in $k$ agents on each side of an optimal agent. \end{proof}

\begin{proof}[Alternative proof of Lemma ~\ref{lemma:median_optimal} for 5 agents]
    
Let's assume $x_3$ is optimal location. If agent $3$ has the above trait, we are done. Otherwise lets assume w.log that the distances between $x_3$ and $x_1, x_2, x_5$ are calculated counterclockwise. We also assume that if the distance between 2 nodes is zero, the distance between them is clockwise iff they are indexed accordingly. 
\newline
$C_3 = d(x_3,x_4) + d(x_3,x_2)+d(x_3,x_1)+d(x_3,x_5)$
\newline $=d(x_3,x_4)+3d(x_3,x_2)+d(x_2,x_1)+d(x_2,x_5)$
\newline $\geq d(x_2,x_4)+2d(x_3,x_2)+d(x_2,x_1)+d(x_2,x_5)$
\newline $\geq d(x_2,x_4)+d(x_3,x_2)+d(x_2,x_1)+d(x_2,x_5) = C_2$
\newline where we used the fact that $x_1,x_2,x_3,x_4$ are on same semi circle, and $d(x_2,x_4) \leq d(x_2,x_3)+d(x_3,x_4)$. If $d(x_3,x_2)$ is not zero, we get a strict inequality, and that is a contradiction to $C_3$ being optimal location. Hence, $d(x_3,x_2)=0$, which results in agent 2 having 2 agent's distance calculated clockwise (3,4) and 2 counterclockwise (1,5), making him median optimal.
\end{proof}

\begin{lemma_restatement}[\ref{lemma:large_arc_opt}]
    
If $P_i\geq0.5$ for some agent $i$, then $x_i$ is optimal location. \end{lemma_restatement}

\begin{proof}

Since there can only be one arc greater than 0.5, we can assume w.log that $P_k\geq 0.5$ where k is the the median agent when indexing clockwise. Hence all of the agents lie on the same semi circle. By the previous lemma, at the optimal location, there is an agent that computes his distance to other agents $\frac{n-1}{2}$ clockwise and $\frac{n-1}{2}$ counterclockwise. Since all agents lie in same semicircle, only agent $k$ can have this trait, and thus $x_k$ is optimal location. \end{proof}

\medskip
\begin{theorem_restatement}[\ref{Two Pair Analysis}] 
For all instances $\textbf{x} $ that have 2 pairs of agents sharing the same location, $\gamma(\textbf{x})\leq \alpha$.
\end{theorem_restatement}

\begin{proof}
    
 We assume w.log that $x_3$ is the "lone agent" - we can do so since if this isn't the case, we can relabel the agents and achieve so, and have the same instance up to a rotation. We denote his distance from $x_1,x_2$ by $s$ and his distance from $x_4,x_5$ by $t$. Consequently, the arc between the 2 pairs is of length $1-t-s$. Thus, all the instances are of the form $P=(t,0,1-t-s,0,s)$. 
\newline To analyze these instances, we need to somehow define an order on the length of the arcs. w.log we can assume $s\leq t$. Moreover, we are interested in 2 cases: The first is where the largest arc is in front of the lone agent, and the other being the case where the largest arc is in front of one of the pairs. This reduces  to orders: $A: 1-t-s\geq t \geq s$ and $B: t\geq 1-t-s\geq s$. Moreover, we denote $A_1$ the case where $1-t-s\geq 0.5$ and $A_2$ the case where $1-t-s\leq 0.5$. Similarly we denote $B_1$ the case where $t\geq 0.5$ and $B_2$ the case where $t\leq 0.5$.
Now, using these inequalities alone, we can explicitly compute $C_i$ for all $i$ in each case (See figure below). In cases, $A_1$, $B_1$ and $B_2$ the optimal node can easily be identified by using the inequalities above and Lemma 3. In the $A_2$ case, we can only conclude that either $C_1$ or $C_3$ is optimal cost.
\newline
\begin{center}
    
\end{center}
\begin{tabular}{|p{6cm}|p{6cm}|}
\hline
\textbf{$A_1$} & \textbf{$A_2$} \\
\hline
$C_1=s+2(s+t)$) \newline
$C_2=s+2(s+t)$ \newline
$C_3=2s+2t$ \newline
$C_4=t+2(s+t)$ \newline
$C_5=t+2(s+t)$

&

$C_1=2(1-t-s)+s$ \newline
$C_2=2(1-t-s)+s$ \newline
$C_3=2t+2s$ \newline
$C_4=2(1-t-s)+t$\newline
$C_5=2(1-t-s)+t$

\\
\hline
\textbf{$B_1$} & \textbf{$B_2$} \\
\hline
$C_1=2(1-t-s)+s$ \newline
$C_2=2(1-t-s)+s$ \newline
$C_3=2(1-t)+2s$ \newline
$C_4=2(1-t-s)+(1-t)$\newline
$C_5=2(1-t-s)+(1-t)$\newline

&
$C_1= s+2(1-t-s)$ \newline
$C_2= s+2(1-t-s)$ \newline
$C_3 =2s+2t$ \newline
$C_4=t+2(1-t-s)$ \newline
$C_5=t+2(1-t-s)$ \newline

\\
\hline
\end{tabular}
\newline
\newline
Since we know $C_i$ and $P_i$ for all $i$, we can write $\gamma(x)$ explicitly and analyze it for each case. We will have 5 different functions ($A_2$ has 2 options) and we will find the maximum value of $\gamma(x)$ in each case. In each case, $\gamma(x)$ is differentiable, and continious on a compact set. Hence attains a maximum. In all 5 cases, no interior extremum points were found, and all maximum bounds were found on the boundaries, namely when the largest arc was exactly $0.5$. 
\newline
\newline
\begin{tabular}{|p{6cm}|p{6cm}|}
\hline
\textbf{$A_1$} & \textbf{$A_2$} \\
\hline
 
$\frac{2ts+2s+2t}{2s+2t}$ 
Max = $9/8$

&

$\frac{-4s^2-4t^2-6ts+4t+4s}{2-2t-s}$ Max = $9/8$ \newline
$\frac{-4s^2-4t^2-6ts+4t+4s}{2t+ts}$ Max = $9/8$

\\
\hline
\textbf{$B_1$} & \textbf{$B_2$} \\
\hline
$\frac{-4s^2-4st+3s-2t+2}{2-2t-s}$ Max = $\alpha$

&
$\frac{-4s^2-4t^2-6ts+4t+4s}{2-2t-s}$ Max = $\alpha$

\\
\hline
\end{tabular}
\newline We conclude that for 2 pair instances, $\gamma(x)$ is bounded by $\alpha$. Moreover, we conclude the worst 2 pair instance is given by $P= (\frac{-1+\sqrt{2}}{2},0,0.5,0,\frac{2-\sqrt{2}}{2} )\approx (0.207,0,0.5,0,0.293)$. \end{proof}

\begin{theorem_restatement}[\ref{bounding_sc}]
    
 $SC(\textbf{x},g(\textbf{x}))= \sum_{i=1}^n C_iP_i \leq 1.2$ for any instance $\textbf{x}$, and 1.2 is obtained when all agents are distanced equally from one another. \end{theorem_restatement}

\begin{proof}
    
By using the fact that distances are measured by smallest arc between points, and the fact that the sum of the arcs is equal to one, we find the following inequalities:
\[
\begin{aligned}
C_1 = d(x_1,x_2)+d(x_1,x_3)+d(x_1,x_4)+d(x_1,x_5) &\le 2P_3 + 2P_4 + P_5 + P_2 &&= 1 - P_1 + P_3 + P_4 \\
C_2 = d(x_2,x_1)+d(x_2,x_3)+d(x_2,x_4)+d(x_2,x_5) &\le 2P_5 + 2P_4 + P_3 + P_1 &&= 1 - P_2 + P_4 + P_5 \\
C_3 = d(x_3,x_1)+d(x_3,x_2)+d(x_3,x_4)+d(x_3,x_5) &\le 2P_1 + 2P_5 + P_4 + P_2 &&= 1 - P_3 + P_1 + P_5 \\
C_4 = d(x_4,x_1)+d(x_4,x_2)+d(x_4,x_3)+d(x_4,x_5) &\le 2P_1 + 2P_2 + P_3 + P_5 &&= 1 - P_4 + P_1 + P_2 \\
C_5 = d(x_5,x_1)+d(x_5,x_2)+d(x_5,x_3)+d(x_5,x_4) &\le 2P_3 + 2P_2 + P_1 + P_4 &&= 1 - P_5 + P_3 + P_2
\end{aligned}
\]
\newline Let's multiply each $C_i$ by $P_i$ to obtain: 
\newline
\[
\begin{aligned}
C_1P_1 &\leq P_1 - P_1^2 + P_3P_1 + P_4P_1\\
C_2P_2 &\leq P_2 - P_2^2 + P_4P_2 + P_5P_2\\
C_3P_3 &\leq P_3 - P_3^2 + P_1P_3 + P_5P_3\\
C_4P_4 &\leq P_4 - P_4^2 + P_1P_4 + P_2P_4\\
C_5P_5 &\leq P_5 - P_5^2 + P_3P_5 + P_2P_5
\end{aligned}
\]
Now by adding together all the inequalities, we obtain a bound for the social cost:\newline
$SC(\textbf{x},g(\textbf{x}))\leq 1-P_1^2-P_2^2-P_3^2-P_4^2-P_5^2 + 2P_1P_3+2P_4P_1+2P_4P_2+2P_5P_2+2P_5P_3$ \newline
To simplify this bound, we use the fact that the arc lengths sum to 1: \newline
$SC(\textbf{x},g(\textbf{x}))\leq 1-P_1^2-P_2^2-P_3^2-P_4^2 -(1-P_1-P_2-P_3-P_4)^2+2P_1P_3+2P_4P_1+2P_4P_2 +2(P_2+P_3)(1-P_1-P_2-P_3-P_4)$ \newline
On the right hand side we have a quadratic 4 variable function we'll denote $f$. If we bound it, we bound the social cost. We'll find the maximum value of $f$ by computing $\nabla f = 0$: \newline
$\nabla f = (4P_1+4P_2+2P_3-2, 4P_1+8P_2+6P_3+2P_4-4,2P_1+6P_2+8P_3+4P_4-4,2P_2+4P_3+4P_4-2) = (0,0,0,0)$ \newline
This results in the following system of linear equations:
\[
\begin{bmatrix}
4 & 4 & 2 & 0 \\
4 & 8 & 6 & 2 \\
2 & 6 & 8 & 4 \\
0 & 2 & 4 & 4
\end{bmatrix}
\begin{bmatrix}
P_1 \\ P_2 \\ P_3 \\ P_4
\end{bmatrix}
=
\begin{bmatrix}
2 \\ 4 \\ 4 \\ 2
\end{bmatrix}
\]
The solution is $(0.2,0.2,0.2,0.2)$, and by inserting this back to $f$ we get $SC(\textbf{x},g(\textbf{x}))\leq 1.2$. We also notice that the social cost is highest when all the agents are equally apart from one another. \end{proof}

\begin{theorem_restatement}[\ref{main_result}]
For all instances $\textbf{x}$, $\gamma(\textbf{x})\leq \alpha$.\end{theorem_restatement}

\begin{proof}
 First we assume agent $3$ is median optimal. We can calculate the costs: \newline
$C_1=2P_4+P_5+d(x_1,x_5)+d(x_1,x_4)= 2P_4+2P_5+P_3+d(x_1,x_4)$ \newline
$C_2=P_5+P_4 +d(x_2,x_4)+d(x_2,x_5)= P_5+P_4+P_5+P_1+d(x_2,x_5)$\newline
$C_3=P_2+2P_1+2P_5+P_4$\newline
$C_4=P_2+P_1+d(x_2,x_4)+d(x_4,x_1)=P_2+P_1+P_1+P_5+d(x_4,x_1)$\newline
$C_5=2P_2+P_1+d(x_5,x_2)+d(x_5,x_1)=2P_2+P_1+P_3+d(x_5,x_2)$\newline
Where we used the fact that $3$ is median optimal to calculate distances, $d(x_1,x_5)=P_3$ (since $P_3\leq 0.5$) and $d(x_2,x_5)=P_1+P_5$ (since $P_1+P_5\leq 0.5$). \newline
At this stage, all of the $C_i$'s can be explicitly written  if $d(x_1,x_4)$ and $d(x_2,x_5)$ are determined. If they are determined, we can write all the costs as functions of $P_i$, and thus we can write $\gamma(\textbf{x})$ as a function of $P_i$'s only. Using the shortest arc metric we denote:
$d_A(x_1,x_4)= P_1+P_5+P_4$ , $d_B(x_1,x_4)=P_3+P_2$, $d_C(x_2,x_5)=P_5+P_1+P_2$ and $d_D(x_2,x_5)=P_3+P_4$.
We denote $\gamma_{AC}$, $\gamma_{AD}$, $\gamma_{BC}$, $\gamma_{BD}$, by the corresponding distances included in each $\gamma$. By the symmetry of the problem, $\gamma_{AD}$ and $\gamma_{BC}$ are identical optimization problems, so we only need to solve one of them. By solving the appropriate optimization problem for $\gamma_{AC}$,$\gamma_{AD}$ and $\gamma_{BD}$ we will bound the original $\gamma(\textbf{x})$. \newline
Our method will be showing that each of the functions above has no interior maxima, and hence finds it's maxima on the boundary. Since we previously handled the boundary cases, we will achieve our goal.\newline
To simplify notation we denote $(P_1.P_2,P_3,P_4,P_5)=(a,b,c,d,e)$.\newline We first solve the optimization problem for $\gamma_{AC}$: \newline
$\gamma_{AC}=\frac{a(a+c+3d+2e)+b(3e+d+2a+b)+c(b+2a+2e+d)+d(b+3a+2e+d)+e(3b+2a+c+e)}{b+2a+2e+d}$ \newline by inserting the constraint $\sum_{i=1}^5P_i=1$ we can simplify\newline
$\gamma_{AC}(a,b,d,e)=
\frac{-2a^{2}-2ab+2ad-2ae+2be-2de-2e^{2}+3a+b+d+3e}{\,2a+b+d+2e\,}$.\newline
\[
\text{Let }\gamma:= \gamma_{AC}(a,b,d,e),\quad D:=2a+b+d+2e
\]
\[
\nabla \gamma =
\begin{bmatrix}
\dfrac{-4a-2b+2d-2e+3-2\gamma}{D}\\[6pt]
\dfrac{-2a+2e+1-\gamma_{}}{D}\\[6pt]
\dfrac{2a-2e+1-\gamma_{}}{D}\\[6pt]
\dfrac{-2a+2b-2d-4e+3-2\gamma_{}}{D}
\end{bmatrix}=\begin{bmatrix}
0\\
0\\
0\\
0
\end{bmatrix}.
\]
\newline  By summing $\frac{\partial \gamma_{}}{\partial b}=0$ and $\frac{\partial \gamma_{}}{\partial d}=0$ we obtain $\gamma_{}=1$ and we also conclude $a=e$. By summing $\frac{\partial \gamma_{}}{\partial a}=0$ and $\frac{\partial \gamma_{}}{\partial e}=0$ we obtain $a=e=\frac{1}{3}$, which means $a+e>0.5$ which is a out of our boundaries. We conclude there are no interior maxima.
\newline The maxima must lay on the boundary. If it lays on $P_i=0.5$ for some $i$, we are done, since that case is bounded by $\alpha$ by Theorem \ref{large_arc_theorem}. If it lays on $P_1+P_5=0.5$, we are done, since that case is bounded by $\alpha$ by Lemma \ref{lemma9}. Lastly, the maxima may lay on the boundary given by $P_i\geq0$ for some $i$.
\newline Let's assume maxima is on a=0. Thus we will find maxima of: \newline
\[
\gamma_{AC}(0,b,d,e)=
\frac{2be-2de-2e^{2}+b+d+3e}{\,b+d+2e\,}
\].
\newline 
\[
\text{Let }\gamma_{AC}:=\gamma_{AC}(0,b,d,e),\quad D:=b+d+2e.
\]
\[
\nabla\gamma =
\begin{bmatrix}
\dfrac{2e+1-\gamma}{D}\\[6pt]
\dfrac{1-2e-\gamma}{D}\\[6pt]
\dfrac{2b-2d-4e+3-2\gamma}{D}
\end{bmatrix}_{(b,d,e)}=\begin{bmatrix}
0\\
0\\

0
\end{bmatrix}
\]
\newline By summing $\frac{\partial\gamma}{b}=0$ and $\frac{\partial\gamma}{d}=0$ we obtain $\gamma=1$ and $e=0$ - which lies on boundary, and hence there is no interior maxima. By the same reasoning as before, the maxima lies on the boundary. If it lies on $P_i=0.5$ or $P_1+P_5=0.5$ we are done. Otherwise, it lies on $P_i=0$, which means $P_1=0$ and $P_{i\neq1}=0$ which is either a 2 pair instance or an instance with 3 agents in same location, which is already bounded by Theorem \ref{Two Pair Analysis} and Lemma \ref{3_agents_same_spot}.
\newline Let's assume maxima is on $b=0$. Thus we will find maxima of:
\[
\gamma_{AC}(a,0,d,e)=
\frac{-2a^{2}+2ad-2ae-2de-2e^{2}+3a+d+3e}{\,2a+d+2e\,}
\]
\newline
\[
\text{Let }\gamma:=\gamma_{AC}(a,0,d,e),\quad D:=2a+d+2e.
\]
\[
\nabla \gamma =
\begin{bmatrix}
\dfrac{-4a+2d-2e+3-2\gamma}{D}\\[6pt]
\dfrac{\,2a-2e+1-\gamma\,}{D}\\[6pt]
\dfrac{-2a-2d-4e+3-2\gamma}{D}
\end{bmatrix}_{(a,d,e)}=\begin{bmatrix}
0\\
0\\
0
\end{bmatrix}
\]
\newline 

This linear system gives: $d=0.5-3a$, $e=7a-1$ and $\gamma=3-12a$. Inserting this back into $\gamma_{AC}(a,0,d,e)$ we find $a\approx 0.09$ or $a\approx 0.13$ which both lead to $e<0$ which is out of boundary. Hence there are no interior points. By the same reasoning as before, the maxima lies on the boundary. If it lies on $P_i=0.5$ or $P_1+P_5=0.5$ we are done. Otherwise, it lies on $P_i=0$, which means $P_2=0$ and $P_{i\neq2}=0$ which is either a 2 pair instance or an instance with 3 agents on same location, which is already bounded by Theorem \ref{Two Pair Analysis} and Lemma \ref{3_agents_same_spot}.
\newline
Let's assume maxima is on $d=0$. Thus we will find maxima of:
\[
\gamma_{AC}(a,b,0,e)=
\frac{-2a^{2}-2ab-2ae+2be-2e^{2}+3a+b+3e}{\,2a+b+2e\,}
\]
\[
\text{Let }\gamma:=\gamma_{AC}(a,b,0,e),\quad D:=2a+b+2e.
\]
\[
\nabla\gamma =
\begin{bmatrix}
\dfrac{-4a-2b-2e+3-2\gamma}{D}\\[6pt]
\dfrac{-2a+2e+1-\gamma}{D}\\[6pt]
\dfrac{-2a+2b-4e+3-2\gamma}{D}
\end{bmatrix}_{(a,b,e)}=\begin{bmatrix}
0\\
0\\

0
\end{bmatrix}.
\]

The linear system gives $b=0.5-3e$, $a=7e-1$, $\gamma=3-12e$. Inserting this back into $\gamma_{AC}(a,b,0,e)$ gives $e\approx0.09$ or $e\approx 0.13$ which results in $a<0$ which is out of boundaries.  Hence there are no interior points. By the same reasoning as before, the maxima lies on the boundary. If it lies on $P_i=0.5$ or $P_1+P_5=0.5$ we are done. Otherwise, it lies on $P_i=0$, which means $P_4=0$ and $P_{i\neq4}=0$ which is a either a 2 pair instance or an instance with 3 agents on same location, which is already bounded by Theorem \ref{Two Pair Analysis} and Lemma \ref{3_agents_same_spot}.
\newline
Lets assume maxima is on $e=0$. Thus we will find maxima of: 
\[
\gamma_{AC}(a,b,d,0)=
\frac{-2a^{2}-2ab+2ad+3a+b+d}{\,2a+b+d\,}
\]
\[
\text{Let }\gamma:=\gamma_{AC}(a,b,d,0),\quad D:=2a+b+d.
\]
\[
\nabla\gamma =
\begin{bmatrix}
\dfrac{-4a-2b+2d+3-2\gamma}{D}\\[6pt]
\dfrac{-2a+1-\gamma}{D}\\[6pt]
\dfrac{2a+1-\gamma}{D}
\end{bmatrix}_{(a,b,d)}=\begin{bmatrix}
0\\
0\\

0
\end{bmatrix}
\]
By summing $\frac{\partial\gamma}{b}=0$ and $\frac{\partial\gamma}{d}=0$ we obtain $\gamma=1$ and $a=0$ - which lies on the boundary, hence there is not interior point.
By the same reasoning as before, the maxima lies on the boundary. If it lies on $P_i=0.5$ or $P_1+P_5=0.5$ we are done. Otherwise, it lies on $P_i=0$, which means $P_5=0$ and $P_{i\neq5}=0$ which is either a 2 pair instance or an instance with 3 agents on same location, which is already bounded by Theorem \ref{Two Pair Analysis} and Lemma \ref{3_agents_same_spot}.
\newline
Lastly, we need to take care of the case of the boundary $P_3=0$. To do so, we will rewrite $\gamma_{AC}$ as a function of $(a,b,c,d)$ and insert $c=0$.
\[
\gamma_{AC}(a,b,c,d)
=\frac{-2a^{2}-6ab-2ac+2ad+2a-4b^{2}-6bc-4bd+4b-2c^{2}-2cd+c+1}
{\,2 - b - 2c - d\,}.
\]
\[
\gamma_{AC}(a,b,0,d)=
\frac{-2a^{2}-6ab+2ad+2a-4b^{2}-4bd+4b+1}{\,2-b-d\,}
\]
\[
\text{Let }\gamma:=\gamma_{AC}(a,b,0,d),\quad D:=2-b-d.
\]
\[
\nabla\gamma =
\begin{bmatrix}
\dfrac{-4a-6b+2d+2}{D}\\[6pt]
\dfrac{-6a-8b-4d+4+\gamma}{D}\\[6pt]
\dfrac{2a-4b+\gamma}{D}
\end{bmatrix}_{(a,b,d)}=\begin{bmatrix}
0\\
0\\

0
\end{bmatrix}.
\]

The linear equations give:
\[
a + b = \tfrac{1}{2}, \qquad d = b, \qquad \gamma = 6b - 1.
\]

By inserting back to $\gamma_{AC}(a,b,0,d)$:
\[
\frac{-6b^{2} + 2b + \tfrac{3}{2}}{\,2 - 2b\,} = 6b - 1.
\]

\[
b^{*}=d^{*}=1-\frac{\sqrt{15}}{6}, \qquad
a^{*}=e^{*}=\frac{\sqrt{15}-3}{6}, \qquad
c^{*}=0, \qquad
\gamma^{*}=5-\sqrt{15} \approx1.12
\]
which is bounded by 1.34. If the maxima does lie on boundary, like before we reduce to a 2 pair instance which is bounded.
\newline
\newline
\textbf{We conclude $\gamma_{AC}$ is bounded by $\alpha$.}
\newline
Next we optimize the $\gamma_{AD}$ function. The method is exactly the same as $\gamma_{AC}$, hence the computation will be more concise: \newline
$\gamma_{AD}= \frac{a(3d+2e+c+a)+b(2e+2d+a+c)+c(b+2a+2e+d)+d(b+3a+2e+d)+e(2b+a+2c+d)}{b+2a+2e+d}$
By using constraint:
\[
\gamma_{AD}(a,b,d,e)
=\frac{-2a^{2}-4ab+2ad-4ae-2b^{2}-2be-2de-4e^{2}+3a+2b+d+4e}{2a+b+d+2e}.
\]
\[
\text{Let }\gamma:=\gamma_{AD}(a,b,d,e),\quad D:=2a+b+d+2e.
\]
\[
\nabla\gamma =
\begin{bmatrix}
\dfrac{3-4a-4b+2d-4e-2\gamma}{D}\\[6pt]
\dfrac{2-4a-4b-2e-\gamma}{D}\\[6pt]
\dfrac{1+2a-2e-\gamma}{D}\\[6pt]
\dfrac{4-4a-2b-2d-8e-2\gamma}{D}
\end{bmatrix}_{(a,b,d,e)}=\begin{bmatrix}
0\\
0\\
0\\
0
\end{bmatrix}.
\]

\noindent The linear equations give:
\newline $a=d=\frac{4\gamma-1}{22}$, $b=\frac{7-6\gamma}{22}$, $e=\frac{10-7\gamma}{22}$.
Inserting these back to $\gamma_{AD}$ results in $b<0$, hence there is no interior point.
Now we check the boundaries for maxima. If maxima lies on boundary $P_i=0.5$ or $P_1+P_5=0.5$ we are done by Theorem \ref{large_arc_theorem} and Lemma \ref{lemma9}.
\newline Assume $a=0$.
\[
\gamma_{AD}(0,b,d,e)
=\frac{-2b^{2}-2be-2de-4e^{2}+2b+d+4e}{\,b+d+2e\,}.
\]
\[
\text{Let }\gamma:=\gamma_{AD}(0,b,d,e),\quad D:=b+d+2e.
\]
\[
\nabla\gamma =
\begin{bmatrix}
\dfrac{2-4b-2e-\gamma}{D}\\[6pt]
\dfrac{1-2e-\gamma}{D}\\[6pt]
\dfrac{4-2b-2d-8e-2\gamma}{D}
\end{bmatrix}_{(b,d,e)}=\begin{bmatrix}
0\\
0\\

0
\end{bmatrix}
\]
By looking at $\frac{\partial\gamma_{AC}}{\partial d}=0$ we obtain $\gamma\leq1$ which is either bounded or not an interior point. So the maxima lays on boundary.
 If it lies on $P_i=0.5$ or $P_1+P_5=0.5$ we are done. Otherwise, it lies on $P_i=0$, which means $P_1=0$ and $P_{i\neq1}=0$ which is either a 2 pair instance or an instance with 3 agents on same location, which is already bounded by Theorem \ref{Two Pair Analysis} and Lemma \ref{3_agents_same_spot}.
\newline
Let's assume $b=0$.
\[
\gamma_{AD}(a,0,d,e)
=\frac{-2a^{2}+2ad-4ae-2de-4e^{2}+3a+d+4e}{\,2a+d+2e\,}.
\]

\[
\text{Let }\gamma:=\gamma_{AD}(a,0,d,e),\quad D:=2a+d+2e.
\]
\[
\nabla\gamma =
\begin{bmatrix}
\dfrac{3-4a+2d-4e-2\gamma}{D}\\[6pt]
\dfrac{1+2a-2e-\gamma}{D}\\[6pt]
\dfrac{4-4a-2d-8e-2\gamma}{D}
\end{bmatrix}_{(a,d,e)}=\begin{bmatrix}
0\\
0\\

0
\end{bmatrix}
\]

The linear equations give:
$a=\frac{2\gamma+1}{20}$, $d=\frac{4\gamma-3}{10}$, $e=\frac{11-8\gamma}{20}$
\newline Inserting these values into $\gamma$ and solving results in points outside of boundaries. 
So the maxima lays on boundary.
 If it lies on $P_i=0.5$ or $P_1+P_5=0.5$ we are done. Otherwise, it lies on $P_i=0$, which means $P_2=0$ and $P_{i\neq2}=0$ which is either a 2 pair instance or an instance with 3 agents on same location, which is bounded by Theorem \ref{Two Pair Analysis} and Lemma \ref{3_agents_same_spot}.
 \newline
 \newline
 Lets assume $d=0$:
 \[
\gamma_{AD}(a,b,0,e)
=\frac{-2a^{2}-4ab-4ae-2b^{2}-2be-4e^{2}+3a+2b+4e}{\,2a+b+2e\,}.
\]
\[
\text{Let }\gamma:=\gamma_{AD}(a,b,0,e),\quad D:=2a+b+2e.
\]
\[
\nabla\gamma =
\begin{bmatrix}
\dfrac{3-4a-4b-4e-2\gamma}{D}\\[6pt]
\dfrac{2-4a-4b-2e-\gamma}{D}\\[6pt]
\dfrac{4-4a-2b-8e-2\gamma}{D}
\end{bmatrix}_{(a,b,e)}=\begin{bmatrix}
0\\
0\\

0
\end{bmatrix}
\]

\noindent The linear equations give:\newline
$a=\gamma-\tfrac14$, $b=\tfrac12-\gamma$, $e=\tfrac12-\frac{\gamma}{2}$ \newline
Thus $b$ is negative, and hence there are no interior points. So the maxima lays on boundary.
 If it lies on $P_i=0.5$ or $P_1+P_5=0.5$ we are done. Otherwise, it lies on $P_i=0$, which means $P_4=0$ and $P_{i\neq4}=0$ which is either a 2 pair instance or an instance with 3 agents on same location, which is already bounded by Theorem \ref{Two Pair Analysis} and Lemma \ref{3_agents_same_spot}.
 \newline
 Let's assume $e=0$:
 \[
\gamma_{AD}(a,b,d,0)
=\frac{-2a^{2}-4ab+2ad-2b^{2}+3a+2b+d}{\,2a+b+d\,}.
\]
\[
\text{Let }\gamma:=\gamma_{AD}(a,b,d,0),\quad D:=2a+b+d.
\]
\[
\nabla\gamma =
\begin{bmatrix}
\dfrac{3-4a-4b+2d-2\gamma}{D}\\[6pt]
\dfrac{2-4a-4b-\gamma}{D}\\[6pt]
\dfrac{1+2a-\gamma}{D}
\end{bmatrix}_{(a,b,d)}=\begin{bmatrix}
0\\
0\\

0
\end{bmatrix}
\]

\noindent The linear equations give: \newline
$a=\frac{\gamma-1}{2}$, $b=1-\frac{3\gamma}{4}$, $d=\frac{\gamma}{2}-\frac{1}{2}$\newline
By inserting these values back to $\gamma$ we find $\gamma=\frac{4}{3}$ and $b=0$ which is out of boundary. 
Hence there are no interior points. So the maxima lays on boundary.
 If it lies on $P_i=0.5$ or $P_1+P_5=0.5$ we are done. Otherwise, it lies on $P_i=0$, which means $P_5=0$ and $P_{i\neq5}=0$ which is either a 2 pair instance or an instance with 3 agents on same location, which is bounded by Theorem \ref{Two Pair Analysis} and Lemma \ref{3_agents_same_spot}.
 \newline
 Lastly, we need to take care of the case of the boundary $P_3=0$. To do so, we will rewrite $\gamma_{AD}$ as a function of $(a,b,c,d)$ and insert $c=0$.
 \[
\gamma_{AD}(a,b,c,d)
=\frac{-2a^{2}-6ab-4ac+3a-4b^{2}-6bc-4bd+4b-4c^{2}-6cd+4c-2d^{2}+3d}{\,2-b-2c-d\,}.
\]
\[
\gamma_{AD}(a,b,0,d)
=\frac{-2a^{2}-6ab-4b^{2}-4bd-2d^{2}+3a+4b+3d}{\,2-b-d\,}.
\]
\[
\text{Let }\gamma:=\gamma_{AD}(a,b,0,d),\quad D:=2-b-d.
\]
\[
\nabla\gamma =
\begin{bmatrix}
\dfrac{3-4a-6b}{D}\\[6pt]
\dfrac{4-6a-8b-4d+\gamma}{D}\\[6pt]
\dfrac{3-4b-4d+\gamma}{D}
\end{bmatrix}_{(a,b,d)}=\begin{bmatrix}
0\\
0\\

0
\end{bmatrix}
\]

\noindent The linear equations give: \newline
$a=-\frac{3}{10}$, $b=\frac{7}{10}$, $\gamma=4d-\frac{1}{5}$
which is obviously out of boundaries. Hence there is no interior point and maxima lays on boudary.
If it lies on $P_i=0.5$ or $P_1+P_5=0.5$ we are done. Otherwise, it lies on $P_i=0$, which means $P_3=0$ and $P_{i\neq3}=0$ which is either a 2 pair instance or an instance with 3 agents on same location, which is already bounded by Theorem \ref{Two Pair Analysis} and Lemma \ref{3_agents_same_spot}.
\newline
\textbf{We conclude $\gamma_{AD}$ is bounded by 1.34.}
\newline
\newline
We now optimize the $\gamma_{BD}$ function:\newline
$\gamma_{BD}=\frac{a(b+2c+2d+e)+b(a+c+2d+2e)+c(b+2a+2e+d)+d(2a+2b+c+e)+e(a+2b+2c+d)}{b+2a+2e+d}$ \newline
\[
\gamma_{BD}(a,b,d,e)
= \frac{-4a^{2}-4ab-2ad-6ae-2b^{2}-2be-2d^{2}-4de-4e^{2}+4a+2b+2d+4e}{2a+b+d+2e}.
\]
\[
\text{Let }\gamma:=\gamma_{BD}(a,b,d,e),\quad D:=2a+b+d+2e.
\]
\[
\nabla\gamma =
\begin{bmatrix}
\dfrac{-8a-4b-2d-6e+4-2\gamma}{D}\\[6pt]
\dfrac{-4a-4b-2e+2-\gamma}{D}\\[6pt]
\dfrac{-2a-4d-4e+2-\gamma}{D}\\[6pt]
\dfrac{-6a-2b-4d-8e+4-2\gamma}{D}
\end{bmatrix}_{(a,b,d,e)}=\begin{bmatrix}
0\\
0\\
0\\
0
\end{bmatrix}.
\]

\noindent Solving the linear equations gives:
$a=b=d=e=\frac{2-\gamma}{10}$ and inserting this into $\gamma_{BD}$ gives $a=b=d=e=0$ which is on boundary.
Hence there are no interior points.
Inserting these back to $\gamma_{BD}$ results in $b<0$, hence there is no interior point.
Now we check the boundaries for maxima. If maxima lies on boundary $P_i=0.5$ or $P_1+P_5=0.5$ we are done by Theorem \ref{large_arc_theorem} and Lemma \ref{lemma9}.
\newline  Otherwise, Assume $a=0$:
\[
\gamma_{BD}(0,b,d,e)
= \frac{-2b^{2}-2be-2d^{2}-4de-4e^{2}+2b+2d+4e}{b+d+2e}.
\]
\[
\text{Let }\gamma:=\gamma_{BD}(0,b,d,e),\quad D:=b+d+2e.
\]
\[
\nabla\gamma =
\begin{bmatrix}
\dfrac{-4b-2e+2-\gamma}{D}\\[6pt]
\dfrac{-4d-4e+2-\gamma}{D}\\[6pt]
\dfrac{-2b-4d-8e+4-2\gamma}{D}
\end{bmatrix}_{(b,d,e)}\begin{bmatrix}
0\\
0\\

0
\end{bmatrix}
\]
By subtracting $\frac{\partial \gamma_{BD}}{\partial b}$ and $\frac{\partial \gamma_{BD}}{\partial d}$ we obtain $e=0$, which is on boundary. Hence there are no interior points, and maxima lies on boundary.
If it lies on $P_i=0.5$ or $P_1+P_5=0.5$ we are done. Otherwise, it lies on $P_i=0$, which means $P_1=0$ and $P_{i\neq1}=0$ which is either a 2 pair instance or an instance with 3 agents on same location, which is already bounded by Theorem \ref{Two Pair Analysis} and Lemma \ref{3_agents_same_spot}.
\newline
Let's assume $b=0$:
\[
\gamma_{BD}(a,0,d,e)
= \frac{-4a^{2}-2ad-6ae-2d^{2}-4de-4e^{2}+4a+2d+4e}{2a+d+2e}.
\]
\[
\text{Let }\gamma:=\gamma_{BD}(a,0,d,e),\quad D:=2a+d+2e.
\]
\[
\nabla\gamma =
\begin{bmatrix}
\dfrac{-8a-2d-6e+4-2\gamma}{D}\\[6pt]
\dfrac{-2a-4d-4e+2-\gamma}{D}\\[6pt]
\dfrac{-6a-4d-8e+4-2\gamma}{D}
\end{bmatrix}_{(a,d,e)}=\begin{bmatrix}
0\\
0\\

0
\end{bmatrix}
\]
Solving the linear equations gives:
\newline $a=\frac{2-\gamma}{6}$, $d=\frac{2-\gamma}{12}$, $e=\frac{2-\gamma}{12}$ \newline
And by inserting these values back to $\gamma_{BD}$ we get $a=d=e=0$ which is on boundary. Hence there are no interior points and maxima lies on boundary.
If it lies on $P_i=0.5$ or $P_1+P_5=0.5$ we are done. Otherwise, it lies on $P_i=0$, which means $P_2=0$ and $P_{i\neq2}=0$ which is either a 2 pair instance or an instance with 3 agents on same location, which is already bounded by Theorem \ref{Two Pair Analysis} and Lemma \ref{3_agents_same_spot}.
\newline
Let's assume $d=0$:
\[
\gamma_{BD}(a,b,0,e)
= \frac{-4a^{2}-4ab-6ae-2b^{2}-2be-4e^{2}+4a+2b+4e}{2a+b+2e}.
\]
\[
\text{Let }\gamma:=\gamma_{BD}(a,b,0,e),\quad D:=2a+b+2e.
\]
\[
\nabla\gamma =
\begin{bmatrix}
\dfrac{-8a-4b-6e+4-2\gamma}{D}\\[6pt]
\dfrac{-4a-4b-2e+2-\gamma}{D}\\[6pt]
\dfrac{-6a-2b-8e+4-2\gamma}{D}
\end{bmatrix}_{(a,b,e)}=\begin{bmatrix}
0\\
0\\

0
\end{bmatrix}
\]
Solving the linear equations gives: \newline
$a=\frac{2-\gamma}{12}$, $b=\frac{2-\gamma}{12}$, $e=\frac{2-\gamma}{6}$
\newline Inserting these values back to $\gamma_{BD}$ gives $a=b=e=0$ which is on boundary. Hence their are no interior points and maxima is on boundary.
If it lies on $P_i=0.5$ or $P_1+P_5=0.5$ we are done. Otherwise, it lies on $P_i=0$, which means $P_4=0$ and $P_{i\neq4}=0$ which is either a 2 pair instance or an instance with 3 agents on same location, which is already bounded by Theorem \ref{Two Pair Analysis} and Lemma \ref{3_agents_same_spot}.
\newline
Assume $e=0$:
\[
\gamma_{BD}(a,b,d,0)
= \frac{-4a^{2}-4ab-2ad-2b^{2}-2d^{2}+4a+2b+2d}{2a+b+d}.
\]
\[
\text{Let }\gamma:=\gamma_{BD}(a,b,d,0),\quad D:=2a+b+d.
\]
\[
\nabla\gamma =
\begin{bmatrix}
\dfrac{-8a-4b-2d+4-2\gamma}{D}\\[6pt]
\dfrac{-4a-4b+2-\gamma}{D}\\[6pt]
\dfrac{-2a-4d+2-\gamma}{D}
\end{bmatrix}_{(a,b,d)}=\begin{bmatrix}
0\\
0\\

0
\end{bmatrix}
\]
Solving the linear equations gives:
\newline
$a=\frac{2-\gamma}{6}$, $b=\frac{2-\gamma}{12}$, $d=\frac{2-\gamma}{6}$
Inserting these values into $\gamma_{BD}$ gives $a=b=d=0$, which is on boundary. Hence their are no interior points, and the maxima lies on the boundary.
If it lies on $P_i=0.5$ or $P_1+P_5=0.5$ we are done. Otherwise, it lies on $P_i=0$, which means $P_5=0$ and $P_{i\neq5}=0$ which is already bounded by Theorem \ref{Two Pair Analysis} and Lemma \ref{3_agents_same_spot}.
\newline
Lastly, we need to take care of the case of the boundary $P_3=0$. To do so, we will rewrite $\gamma_{BD}$ as a function of $(a,b,c,d)$ and insert $c=0$.
\[
\gamma_{BD}(a,b,c,d)
= \frac{-2a^{2}-4ab-2ac-4b^{2}-6bc-2bd-4c^{2}-4cd-2d^{2}+2a+4b+4c+2d}{2-b-2c-d}.
\]
\[
\gamma_{BD}(a,b,0,d)
= \frac{-2a^{2}-4ab-4b^{2}-2bd-2d^{2}+2a+4b+2d}{2-b-d}.
\]
\[
\text{Let }\gamma:=\gamma_{BD}(a,b,0,d),\quad D:=2-b-d.
\]
\[
\nabla\gamma =
\begin{bmatrix}
\dfrac{-4a-4b+2}{D}\\[6pt]
\dfrac{-4a-8b-2d+4+\gamma}{D}\\[6pt]
\dfrac{-2b-4d+2+\gamma}{D}
\end{bmatrix}_{(a,b,d)}=\begin{bmatrix}
0\\
0\\

0
\end{bmatrix}
\]
Solving the linear equations gives:
$a=\frac{1-\gamma}{6}$, $b=\frac{2+\gamma}{6}$, $d=\frac{2+\gamma}{6}$
\newline 
Inserting these values back to $\gamma_{BD}$ gives $a=e=0$ which is on boundary. Hence there are no interior points, and maxima lies on boundary.
If it lies on $P_i=0.5$ or $P_1+P_5=0.5$ we are done. Otherwise, it lies on $P_i=0$, which means $P_3=0$ and $P_{i\neq3}=0$ which is bounded by Theorem \ref{Two Pair Analysis} and Lemma \ref{3_agents_same_spot}.
\newline
\textbf{We conclude $\gamma_{BD}$ is bounded by 1.34.
}\newline
And finally, since $\gamma_{AC}$, $\gamma_{AD}$, $\gamma_{BD}$ are all bounded, we know for sure that for every instance $\textbf{x}$ our original $\gamma(\textbf{x)}$ is equal to one of the 3 bounded functions 
$\gamma_{AC}$, $\gamma_{AD}$, $\gamma_{BD}$, and hence $\gamma(\textbf{x)}$ is bounded for all $\textbf{x}$ as wished. \end{proof}

\noindent While trying to bound the instance space regionally, we attempted to bound cases that are "close" to the instance where all agents are equally spreaded apart on the circle. This result didn't contribute to the final solution, but does cover many cases.\newline

\begin{theorem}\label{theorem: equidistance}
Every instance $P$ such that $\forall i$ $|P_i-0.2|\leq \eps$ we have $\gamma(P)=\frac{1.2}{1.2-3\epsilon}\approx1+2.5\eps$.
\end{theorem}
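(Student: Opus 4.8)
I read the statement as the claim that, over the $\eps$-neighbourhood $\mathcal{N}_\eps:=\{P:\ |P_i-0.2|\le\eps\ \text{for all }i\}$ of the equidistant profile, the PCD ratio obeys $\gamma(P)\le\frac{1.2}{1.2-3\eps}$, and that this bound is essentially tight (it cannot hold with literal equality for \emph{every} such $P$, since the equidistant profile itself has $\gamma=1$). The upper bound is obtained by combining two facts already proved: the global social-cost bound $SC(\textbf{x},g(\textbf{x}))\le 1.2$ of Theorem~\ref{bounding_sc}, and a lower bound $OPT(P)\ge 1.2-3\eps$ valid throughout $\mathcal{N}_\eps$; together these give $\gamma(P)=\frac{SC(\textbf{x},g(\textbf{x}))}{OPT(P)}\le\frac{1.2}{1.2-3\eps}$.

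The only real work is the estimate $OPT(P)\ge1.2-3\eps$. First I would restrict to $\eps$ small enough that $2(0.2+\eps)<3(0.2-\eps)$, i.e.\ $\eps<\tfrac{1}{25}$; then on $\mathcal{N}_\eps$ every arc is below $0.5$, and for every agent $j$ the distance from $x_j$ to the agent two positions away is realised by the two arcs on the short side rather than the three on the long side. In this regime the shortest-arc structure is the same for every agent, and computing the costs exactly as in the proof of Theorem~\ref{main_result} gives, for every agent $j$ (indices mod $5$),
\[
C_j=2P_{j-2}+2P_{j+2}+P_{j-1}+P_{j+1}=1+P_{j-2}+P_{j+2}-P_j,
\]
since the two arcs incident to $x_j$ are $P_{j-2}$ and $P_{j+2}$ and the five arcs sum to $1$. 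Using $P_{j-2},P_{j+2}\ge0.2-\eps$ and $P_j\le0.2+\eps$ yields $C_j\ge1.2-3\eps$ for every $j$, and by Lemma~\ref{lemma:opt_on_top} the optimum is attained at some agent, so $OPT(P)=\min_j C_j\ge1.2-3\eps$.

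For tightness I would exhibit the explicit profile $P^{\ast}=(0.2+\eps,\,0.2+\tfrac{\eps}{2},\,0.2-\eps,\,0.2-\eps,\,0.2+\tfrac{\eps}{2})\in\mathcal{N}_\eps$. The formula above gives $C_1=1.2-3\eps$, which is the minimum, so $OPT(P^{\ast})=1.2-3\eps$; a direct evaluation of $SC(P^{\ast},g(P^{\ast}))=\sum_i P_i C_i$ gives $1.2-9\eps^{2}$, the first-order term vanishing because the equidistant profile is a critical point of the (quadratic) social cost on the simplex, cf.\ Theorem~\ref{bounding_sc}. Hence $\gamma(P^{\ast})=\frac{1.2-9\eps^{2}}{1.2-3\eps}=\frac{1.2}{1.2-3\eps}-O(\eps^{2})=1+2.5\eps+O(\eps^{2})$, so the bound is attained up to $O(\eps^{2})$ and its leading coefficient $2.5$ is exact.

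The main obstacle — and really the only subtlety — is the bookkeeping in the middle step: one must check that on the whole neighbourhood $\mathcal{N}_\eps$ no pairwise distance ``flips'' to the long way around, so that the single clean identity $C_j=1+P_{j-2}+P_{j+2}-P_j$ holds simultaneously for every agent and every profile in $\mathcal{N}_\eps$. This is exactly where the smallness hypothesis on $\eps$ enters, and it is the reason the statement should be understood with an implicit restriction such as $\eps<\tfrac{1}{25}$; once the identity is in hand, the result is immediate from Theorem~\ref{bounding_sc} and Lemma~\ref{lemma:opt_on_top} together with the short calculation for $P^{\ast}$.
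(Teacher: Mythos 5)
Your proposal is correct and follows essentially the same route as the paper: bound the numerator by $1.2$ via Theorem~\ref{bounding_sc} and bound the denominator below by $1.2-3\eps$ using the identity $C_j=1+P_{j-2}+P_{j+2}-P_j$ (the paper only exhibits the extremal perturbation of $C_3$ and, like you, really proves an upper bound rather than the stated equality). The one point to watch is your explicit restriction $\eps<\tfrac{1}{25}$: the paper's subsequent corollary invokes the theorem with $\eps=0.1$, so you would additionally need to handle the regime where some two-arc distance flips to the long way around --- a short computation shows that such a flip forces the corresponding $C_j\ge 1.4-4\eps\ge 1.2-3\eps$ whenever $\eps\le 0.2$, so the bound survives and your argument extends with one extra case.
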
 
\begin{proof}

In the Equidistance instance, all agent's are on an optimal location. So lets take $C_3$ as $OPT(\textbf{x})$. Now, let's look at an instance where $P_i$ change by at most $\epsilon$, such that $C_3$ decreases as much as possible: This instance is: \newline $(0.2,0.2,0.2,0.2,0.2) \rightarrow (0.2-\epsilon,0.2+\frac{\epsilon}{2},0.2+\epsilon,0.2+\frac{\epsilon}{2}, 0.2-\epsilon)$. Intuitively we simply moved the agents as close to possible to agent 3, without changing the arc lengths by more than $\epsilon$.
This leads to $C_3=1.2 \rightarrow C_3-3\epsilon$.
Now, given an instance $P$ such that $|P_i-0.2|\leq\epsilon$: \newline
$\gamma(P)=\frac{\sum_{i=1}^5P_iC_i}{C_3}\leq \frac{1.2}{1.2-3\epsilon}\approx 1+2.5\epsilon$ where we used the fact that $SC(\textbf{x},g(\textbf{x}))\leq 1.2$ by Theorem \ref{bounding_sc}. \end{proof}

\begin{corollary}

Every instance $P$ such that $\forall i$ $|P_i-0.2|\leq0.1$ we have $\gamma(P)\leq 1.34$.
\end{corollary}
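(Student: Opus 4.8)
The plan is to obtain this as a one-line specialization of Theorem~\ref{theorem: equidistance}, followed by an elementary numeric check. First I would apply Theorem~\ref{theorem: equidistance} with the parameter choice $\eps = 0.1$: the hypothesis ``$|P_i - 0.2| \le 0.1$ for all $i$'' is exactly the hypothesis of that theorem for this value of $\eps$, so it immediately yields
\[
\gamma(P)\ \le\ \frac{1.2}{1.2 - 3\cdot 0.1}\ =\ \frac{1.2}{0.9}\ =\ \frac{4}{3}.
\]

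It then only remains to observe that $\tfrac{4}{3} = 1.3\overline{3} \le 1.34$, which gives the claim; in fact this already shows $\gamma(P) \le \alpha$, since $\alpha = 7 - 4\sqrt{2} \approx 1.343 > 1.34 > \tfrac43$. If one wants the comparison carried out directly against $\alpha$ rather than its rounded value, note that $\tfrac43 \le 7 - 4\sqrt 2$ is equivalent to $4\sqrt 2 \le \tfrac{17}{3}$, hence to $2 \le \tfrac{289}{144}$, which is true.

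There is essentially no obstacle here: all of the real content sits in Theorem~\ref{theorem: equidistance} (and, behind it, the global social-cost bound of Theorem~\ref{bounding_sc}), and the corollary is just a substitution plus a scalar inequality. The only thing worth double-checking when invoking the theorem is that the choice $\eps = 0.1$ keeps us inside its intended regime: every arc then satisfies $P_i \in [0.1, 0.3]$, so no arc reaches $0.5$ and the equidistance estimate $OPT(\mathbf{x}) \ge 1.2 - 3\eps$ used in the proof of Theorem~\ref{theorem: equidistance} applies without complication. (As remarked in the text, this region is not needed for the main argument, but the corollary records that the ``nearly equidistant'' instances are comfortably bounded.)
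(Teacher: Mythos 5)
Your proposal is correct and matches the paper's own proof exactly: both substitute $\eps = 0.1$ into Theorem~\ref{theorem: equidistance} to get $\gamma(P) \le \tfrac{1.2}{0.9} = \tfrac{4}{3} \le 1.34$. Your extra check that $\tfrac{4}{3} \le 7 - 4\sqrt{2}$ is a harmless (and correct) addition.
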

\begin{proof}

Inserting $\epsilon=0.1$ in Theorem \ref{theorem: equidistance} we obtain:
\newline $\gamma(P)\leq \frac{1.2}{1.2-3\epsilon}=\frac{1.2}{0.9}\leq 1.34$ \end{proof}
\newpage

\end{document}